\definecolor{darkgreen}{rgb}{0.0,0,0.9}
\DeclareMathAlphabet{\mathpzc}{OT1}{pzc}{m}{it}
\newtheorem{propo}{Proposition}[section]
\newtheorem{lemma}[propo]{Lemma}
\newtheorem{coro}[propo]{Corollary}
\newtheorem{thm}[propo]{Theorem}
\newtheorem{remark}[propo]{Remark}
\def\cA{{\cal A}}
\def\cH{{\cal H}}
\def\cC{{\cal C}}
\def\event{\mathcal{E}}
\def\tbeta{\tilde{\beta}}
\def\reals{{\mathbb R}}
\def\ve{{\varepsilon}}
\def\eps{{\varepsilon}}
\def\prob{{\mathbb P}}
\def\E{{\mathbb E}}
\def\tC{\widetilde{C}}
\def\L0{{L_0}}
\def\de{{\rm d}}
\def\<{\langle}
\def\>{\rangle}
\def\hSigma{\widehat{\Sigma}}
\def\F{{\sf F}}
\def\ind{{\mathbb I}}
\def\F{{\sf F}}
\def\normal{{\sf N}}
\def\LOND{\textsc{Lond\,}}
\def\LORD{\textsc{Lord\,}}
\def\event{\mathcal{E}}
\def\v*{v_0}
\def\T*{T_0}
\def\u*{u_0}
\def\F*{F_0}
\definecolor{olivegreen}{rgb}{0,0.6,0.4}
\def\FWER{{\rm FWER}}
\def\FDR{{\rm FDR}}
\def\mFDR{{\rm mFDR}}
\def\FDP{{\rm FDP}}
\def\cH{{\mathcal{H}}}
\def\cU{{\mathcal{U}}}
\def\BH{{\rm BH}}
\def\bx{\bar{x}}
\newcommand{\ajcomment}[1]{}
\newcommand{\labitem}[2]{%
\def\@itemlabel{\text{#1}}
\item
\def\@currentlabel{#1}\label{#2}}
\title{On Online Control of False Discovery Rate}
\author{Adel Javanmard
            \footnote{Department of Electrical Engineering, Stanford University and UC Berkeley, supported by a CSoI fellowship. Email: \url{adelj@stanford.edu} }
             \,and Andrea~Montanari 
            \footnote{Department of Electrical Engineering and Department of Statistics, Stanford University. Email: \url{montanar@stanford.edu}}
            }
\begin{document}

\maketitle

\begin{abstract}
Multiple hypotheses testing is a core problem in statistical inference and arises in almost every scientific field.  
Given a sequence of null hypotheses $\cH(n) = (H_1,\dotsc, H_n)$, 
Benjamini and Hochberg~\cite{benjamini1995controlling} introduced the false discovery rate (\FDR), which is the expected proportion
of false positives among rejected null hypotheses, and
proposed a testing procedure that controls $\FDR$
below a pre-assigned significance level. They also proposed a different criterion, called $\mFDR$, which does not control a property of
 the realized set of tests; rather it controls the ratio of expected number of false discoveries to the expected number of discoveries.
 
 In this paper, we propose two procedures for multiple hypotheses
 testing that we will call $\LOND$ and $\LORD$. 
These procedures control $\FDR$ and $\mFDR$ in an \emph{online manner}.
 Concretely, we consider an ordered --possibly infinite-- sequence of null hypotheses $\cH =
 (H_1,H_2,H_3,\dots )$ where, at  each step $i$, the
 statistician must  decide whether to reject  hypothesis 
$H_i$ having access only to the previous decisions. To the best of our
knowledge, our work is the first that controls $\FDR$ in this
setting. 
This model was introduced by Foster and Stine~\cite{alpha-investing}
whose alpha-investing rule only controls $\mFDR$ in online manner.
 
In order to compare different procedures, we develop lower bounds on
the total discovery rate under the mixture model where each null
hypothesis is truly false with probability $\eps$, for a fixed
arbitrary $\eps$, independently of others. Conditional on the set of
true null hypotheses, $p$-values are independent, and  iid according
to some non-uniform distribution for the non-null hypotheses. 
Under this model, we prove that both $\LOND$ and $\LORD$ have nearly
linear number of discoveries.
We further propose an adjustment to $\LOND$ to address arbitrary
correlation among the $p$-values. 

Finally, we evaluate the performance of our procedures on both
synthetic and real data comparing them with alpha-investing rule, 
Benjamin-Hochberg method and a Bonferroni procedure.          
\end{abstract}

%\begingroup
%\hypersetup{linktocpage}
%\tableofcontents
%\endgroup
%==============================================================
\section{Introduction}
The common practice in claiming a scientific discovery is to support such claim with a $p$-value
as a measure of statistical significance. Hypotheses with $p$-values below a significance level $\alpha$, typically $0.05$, are considered
to be \emph{statistically significant}. While this ritual controls type I errors for single testing problems, in case of testing multiple
hypotheses we need to adjust the significance levels to control other metrics such as family-wise error rate  (FWER) or false discovery rate (FDR). 

As a concrete example, consider a microarray experiment that studies the changes in genetic expression levels of thousands of genes between a set of normal control samples and  a set of prostate cancer patients.  
The goal is to identify a subset of genes that have association with the prostate cancer. 
This can be formulated as a multiple hypotheses testing problem with
many hypotheses-- say $p$-- where a few of them -- say $s$-- are non
null. Indeed, we expect only a small number of genes to be relevant to the cancer. In such setting, an unguarded use of single-inference procedures leads to a large false positive (false discovery) rate. In particular, if we consider a fixed significance level $\alpha$ for all the tests, each of $p-s$ truly null hypotheses can be falsely rejected with probability $\alpha$.  Therefore, we get $\alpha(p-s)$ wrong findings in expectation. This situation becomes more dramatic when more genes are tested over time and $p/s \to \infty$. 

Let us stress two challenges that arise with increasing frequency in
modern data-analysis problems:
\begin{enumerate}
\item[$I.$] \emph{The number of
hypotheses is unknown or potentially infinite.} This is especially the
case when a given line of research engages numerous teams across the
world. Hypotheses are generated over time by different researchers and
tested without central control. If each test is carried out without
taking into account previous discoveries, this will generate a
constant stream of false discoveries. If the underlying number of true facts is bounded, false
discoveries will over-run true ones over time.

 In the prostate cancer example, more genes factors (genetic and
 environmental)  will be tested for having significant association with cancer. 
 If previous discoveries are not taken into account, this
 will generate a constant stream of false associations.
\item[$II.$] \emph{Scientific research is decentralized.} 
An easy solution to the previous problem could be obtained by
coordinating research on a given topic through a central control. 
In our running example, there could be a center that coordinates
research on prostate cancer. After accumulating all raw experimental data on
the issue, and all hypotheses (e.g. all conjectured associations)
the center carries out the data analysis. For instance, it performs a
multiple hypotheses test, controlling FDR. 

Of course, the very decentralized nature of scientific research
prevents such a solution. We instead seek a solution by which
at each step the statistician can decide whether to reject the current
hypothesis on the basis of the current evidence, and minimal
information about previous hypotheses. 
\end{enumerate}
These remarks motivate the following setting, first introduced in \cite{alpha-investing} (a more formal definition
will be provided below).
\begin{quote}
\emph{Hypotheses arrive sequentially in a stream. At each step, the investigator must decide whether to reject the current
null hypothesis without having access to the number of hypotheses (potentially infinite) or the future $p$-values, but solely based on the previous decisions.}
\end{quote}
In order to illustrate this scenario, consider an approach that would 
control FWER, i.e. the probability of rejecting at least one true
null hypothesis. This can be achieved by choosing different
significance levels $\alpha_i$ for tests 
$H_i$, with $\underline{\alpha}=(\alpha_i)_{i\ge 1}$ summable, e.g.,
$\alpha_i = \alpha 2^{-i}$. 
Notice that the researcher only needs to know the number of tests
performed before the current one, in order to implement this scheme.
However, this leads to small statistical power. 
In particular, obtaining a discovery at later steps becomes very unlikely.

Since  Benjamini and Hochberg's seminal
paper~\cite{benjamini1995controlling}, FDR has been widely used in multiple testing
problems and nowadays serves as the acceptable criterion to reduce
risk of spurious discoveries. FDR controls the proportion of false
rejections rather than the probability 
of at least one rejection. This metric is particularly useful when
there is no strong interest in any single hypotheses, but instead we
would like to find a set of potential predictors. 
Benjamini and Hochberg also proposed a sequential testing procedure, referred to as BH hereafter, to control FDR in multiple testing problems assuming
that all the $p$-values are given a priori. 
Let us briefly recall the BH procedure. Given $p$-values $p_1, p_2, \dotsc, p_n$ and a significance level $\alpha$, 
follow the steps below:
\begin{enumerate}
\item Let $p_{(i)}$ be the $i$th $p$-value in the (increasing) sorted order, and define $p_{(0)} = 0$.  Further. let
\begin{align}\label{eq:iBH}
i_{\BH} \equiv \max\Big\{0\le i\le n:\, p_{(i)}\le \alpha i/n \Big\}\,.
\end{align}
\item Reject $H_j$ for every test where $p_j \le p_{i_{(\BH)}}$.
\end{enumerate}
Note that BH requires the knowledge of \emph{``all"} $p$-values to
determine the significance level for testing the hypotheses. Hence, it
does not address the scenario described above.

In this paper, we propose a method for \emph{online} control of false discovery rate. 
Namely, we consider a sequence of hypotheses $H_1, H_2, H_3, \dotsc$
that arrive sequentially 
in a stream, with corresponding p-values $p_1$, $p_2$, $\dots$. 
We aim at developing a testing mechanism that ensures false discovery
rate remains below 
a pre-assigned level $\alpha$. 
A testing procedure provides a sequence of significance levels $\alpha_i$, with decision rule:
\begin{eqnarray}\label{eq:Ti}
T_i = \begin{cases}
1,& \text{if }p_i\le\alpha_i\quad\quad\;\text{(reject $H_i$)}\\
0,& \text{otherwise}\quad \quad \text{(accept $H_i$)}
\end{cases}
\end{eqnarray}
In \emph{online} testing, we require significance levels to be functions of prior outcomes:
\begin{eqnarray}
\alpha_i = f(\{T_1, T_2, \dotsc, T_{i-1}\})\,.
\end{eqnarray} 

One further motivation for online hypothesis testing is that it allows
to exploit domain knowledge in a more flexible manner. 
In standard multiple hypotheses testing, domain expertise can be used
to choose the collection of (null) hypotheses that are likely to be rejected. However, after hypotheses are chosen there is no exploitation of domain knowledge. In contrast, in an online framework, domain knowledge can be used to order the hypotheses to increase statistical power. 
Our proposed methods have the property that if the hypotheses that are more likely to be rejected appear first, or if they arrive in batches, we gain larger statistical power and higher discovery rate.

Foster and Stein~\cite{alpha-investing} proposed the
alpha-investing method 
that controls a modified measure, called mFDR, in  online multiple
hypothesis testing (under some technical assumptions). $\mFDR$ is the ratio of expected number of
false discoveries to the expected 
number of discoveries. Alpha-investing starts with an initial wealth,
at most $\alpha$, of allowable mFDR rate.  The wealth is spent for
testing different hypotheses.  Each time a discovery occurs, the
alpha-investing procedure earns  a contribution toward its wealth to
use for further tests.
 We refer to Section~\ref{sec:comparison-alpha} for a more detailed
 discussion on alpha-investing 
procedure and comparison with our proposals.    

The contributions of this paper are summarized as follows.
\begin{description}
\item[Online control of FDR and mFDR.] We present two algorithms, that
  we call
  $\LOND$ and $\LORD$, to control FDR and mFDR in an online fashion. The $\LOND$ algorithm sets the significance levels $\alpha_i$ based on the number of discoveries made so far, while $\LORD$ sets them according to the time of the most recent discovery. To the best of our knowledge, this is the first work that guarantees online control of false discovery rate (FDR). We further propose an adjustment to $\LOND$ to cope with arbitrary dependency among the $p$-values.
Note that in many scenarios, the investigator chooses the hypotheses based on the previous decisions and hence the test statistics and the resulting $p$-values are dependent.

\item[Computing total discovery rate.] In order to compare different procedures, we develop lower bounds on the total discovery rate of our methods under the mixture model where each null hypotheses is truly false with probability $\eps$, for a fixed arbitrary $\eps$, independently of other hypotheses. Conditional on the set of true null hypotheses, $p$-values
 are independent with uniform distribution for null hypotheses and are
 iid according to some non-uniform distribution for the non-null
 hypotheses. Under this model, we show that both \LOND and \LORD
 achieve a nearly  linear number of discoveries. 
  
 \item[Numerical Validation.] We validate our procedures on synthetic
   and real data in Section~\ref{sec:numerical}, showing that they control FDR
   and mFDR in an online setting. We further compare them with the
   alpha-investing method~\cite{alpha-investing}, and with BH and
   Bonferroni procedures. 
We observe that, our online procedures are nearly as powerful as BH,
with often much smaller FDR.

In addition, we corroborate our results regarding the total discovery rate.
 \end{description}

In the rest of the introduction, we provide definitions of FWER, FDR and mFD, and discuss related work. In Section~\ref{sec:main results}, we present our procedures, \LORD and \LOND, that control FDR and mFDR in an online manner. Section~\ref{sec:comparison-alpha} describes alpha-investing rules proposed by~\cite{alpha-investing} for controlling mFDR and explains the differences between these rules and our procedures.
We discuss in Section~\ref{sec:domain-knowledge} how \LORD and \LOND leverage the domain knowledge to achieve higher statistical power. In Section~\ref{sec:discovery-rate}, we compute discovery rate of $\LORD$ and $\LOND$ algorithms under the mixture model, showing their order optimality.
We evaluate performance of \LORD, \LOND, Bonferroni, BH and an alpha-investing rule on synthetic examples in Section \ref{sec:numerical}.
Proof of main theorems and lemmas are provided in Section~\ref{sec:proofs}, with several technical steps deferred to Appendices. 
%%%%%%%%%%%%%%%%%%%%%%%%%%%%%%%%%%%%%
\subsection{Different criteria: FWER, FDR and mFDR}
Consider an ordered sequence of null hypotheses $\cH = (H_i)_{i\ge 1}$, where $H_i$ concerns the value of a parameter $\theta_i$. Without loss of generality, assume that $H_i= \{\theta_i = 0\}$. Rejecting null hypothesis $H_i$ means that $\theta_i$ is \emph{discovered} to be significant.
Let $\Theta$ denote the set of possible values for the parameters. We further let $p_i$ be the $p$-value of test $H_i$ whose distribution depends on the value $\theta_i$. Under the null hypothesis $H_i : \,\theta_i =0$, the corresponding $p$-value is uniformly random in $[0,1]$:
$$p_i\sim \cU([0,1])\,.$$
We let $\cH(n) = (H_1,\dotsc, H_n)$ be the collection of the first $n$ hypotheses in the stream.
The statistic $T_i$ is the indicator that a discovery occurs at time $i$ and $D(n)$ denotes the number of discoveries in $\cH(n)$, hence
$$D(n) = \sum_{i=1}^n T_i\,.$$ 
We also let the random variable $F^\theta_i$ be the indicator that a false discovery occurs at time step $i$ and $V^\theta(n)$ be the number of false discoveries in $\cH(n)$, i.e., the number of hypotheses that are incorrectly rejected. Therefore,
$$V^{\theta}(n) = \sum_{i=1}^n F^\theta_i\,.$$
Throughout the paper, superscript $\theta$ is used to distinguish unobservable variables such as $V^\theta(n)$, from statistics such as  
$D(n)$. However, we drop the superscript when it is clear from the context. 

There are various criteria relevant for multiple testing problem.
\begin{itemize}
\item[$\bullet$] \emph{Family-wise error rate (FWER):} The probability of falsely rejecting any of the null hypotheses in $\cH(n)$:
\begin{equation}
\FWER(n) \equiv \sup_{\theta \in \Theta}\, \prob_\theta\Big(V^\theta(n) \ge 1 \Big)\,.
\end{equation}
\item[$\bullet$] \emph{False discovery rate (FDR):} This criterion was introduced by Benjamini and Hochberg~\cite{benjamini1995controlling}, and is the expected proportion of false discoveries among the rejected hypotheses. We first define
\emph{false discovery proportion} (\FDP) as follows. For $n\ge 1$,
\begin{eqnarray*}
\FDP^\theta(n) \equiv \frac{V^{\theta}(n)}{D(n)\vee 1}\,.
\end{eqnarray*}
The false discovery rate is defined as
\begin{eqnarray}
\FDR(n) \equiv \sup_{\theta\in \Theta} \E_\theta\Big(\FDP^{\theta}(n)\Big)\,.
\end{eqnarray}
\item[$\bullet$] \emph{m-False discovery rate (mFDR):} The
ratio of expected number of false rejections to the expected number of rejections:
\begin{equation}
\mFDR_\eta(m) \equiv \sup_{\theta\in \Theta} \, \frac{\E_\theta(V^\theta(n))}{\E_\theta(D(n))+\eta}\,.
\end{equation}
%
%mFDR is in general a weaker condition than FDR. 
\end{itemize}
Note that while $\FDR$ controls a property of the realized set of tests, $\mFDR$ is the ratio of two expectations
over many realizations.
In general the gap between these metrics can be 
significant. (See Figures~\ref{fig:FDR_Worst_dep2} and~\ref{fig:mFDR_Worst_dep2}.)

\subsection{Further related work}
\label{sec:related work}

We list below a few lines of research  that are related to our work.

\bigskip

\noindent\emph{General context.} An increasing effort was devoted to reducing the risk of fallacious research findings.
Some of the prevalent issues such as publication bias, lack of
replicability and multiple comparisons on a dataset were
discussed in Ioannidis's 2005
papers~\cite{ioannidis2005most,ioannidis2005contradicted} 
and in~\cite{prinz2011believe}. 

\bigskip

\noindent\emph{Statistical databases.} Concerned with the above issues and the
importance of data sharing in the 
genetics community, ~\cite{rosset2014novel} proposed an approach to
public database 
management, called Quality Preserving Database (QPD). 
The premise of QPD is to make a shared data resource amenable to
perpetual use for hypothesis 
testing while controlling FWER and maintaining statistical power of
the tests.  In this scheme, for testing 
a new hypothesis, the investigator should pay a price in form of
additional samples that  should be added to the database. The number
of required samples for each test 
depends on the required effect size and the power for the corresponding test. A key feature of QPD is that controlling type I error is performed at the management layer and the investigator is not concerned with $p$-values for the tests. Instead, investigators provide effect size, assumptions on the distribution of the data, and the desired statistical power. A critical limitation of QPD is that all samples, including those currently in the database and those that will be added, are assumed to have the same quality and are coming from a common underlying distribution.  Motivated by similar concerns in practical data analysis, ~\cite{dwork2014preserving}  applies insights from differential privacy to efficiently use samples to answer adaptively chosen estimation queries.   
These papers however do not address the problem of controlling FDR in online multiple testing.

\bigskip

\noindent\emph{Online feature selection.} Building upon
alpha-investing procedures,~\cite{lin2011vif} develops 
VIF, a method for feature selection in large regression problems. VIF is accurate and computationally very efficient; it uses a one-pass search over the pool of features and applies alpha-investing to test each feature for adding to the model. VIF regression avoids overfitting leveraging the property that alpha-investing controls $\mFDR$. Similarly, one can incorporate $\LORD$ and $\LOND$ procedures in VIF regression to perform fast online feature selection and provably avoid overfitting.

\bigskip

\noindent\emph{High-dimensional and sparse regression.} There has been 
significant interest over the last two years in developing hypothesis
testing procedures for high-dimensional regression, especially in
conjunction with sparsity-seeking methods. Procedures for computing
$p$-values of low-dimensional coordinates were developed in 
\cite{zhang2014confidence,van2014asymptotically,javanmard2014confidence,javanmard2013hypothesis,javanmard2013nearly}.
Sequential and selective inference  methods were proposed in 
\cite{lockhart2014significance,fithian2014optimal,taylor2014exact}.
Methods to control FDR were put forward in \cite{barber2014controlling,bogdan2014slope}.

As exemplified by VIF regression, online hypothesis testing methods can be useful
in this context as they allow to select a subset of regressors through
a one-pass procedure. Also they can be used in conjunction with the
methods of \cite{lockhart2014significance}, where a sequence of
hypothesis is generated by including an increasing number of
regressors (e.g. sweeping values of the regularization parameter).

To the best of our knowledge, the only procedure that compares with
the ones we develop is the ForwardStop rule of
\cite{g2013sequential}.
Note, however, that this approach falls short of addressing the issues 
we consider, for several reasons.
$(i)$ It is not online, at least in the form presented in
\cite{g2013sequential} since it reject the first $\hat{k}$ null
hypotheses, where $\hat{k}$ depends on all the $p$-values.
 $(ii)$ It requires knowledge of all
past $p$-values (not only discovery events) to compute the current
score. $(iii)$ Since it is constrained to reject all hypotheses before
$\hat{k}$, and accept
them after, it cannot achieve any discovery rate
increasing with $n$, let alone nearly linear in $n$. 
For instance in the mixture model of Section \ref{sec:discovery-rate},
if the fraction of true non-null is $\eps<\alpha$, then ForwardStop
achieves $O(1)$ discoveries out of $\Theta(n)$ true non-null. In other
words its power is of order $1/n$ in this simple case (no matter what
is the strength of the signal for non-null hypotheses).

%%%%%%%%%%%%%%%%%%%%%%%%%%%%
\subsection{Notations}\label{sec:notations}
For two functions $f(n)$ and $g(n)$, the notation $f(n) = \Omega(g(n))$ means that $f$ is bounded below by $g$ asymptotically, namely,
there exists positive constant $C$ and $n_0>0$ such that $f(n) \ge Cg(n)$ for $n > n_0$. The notation $f(n) = \Theta(g(n))$ indicates that
$f$ is bounded both above and below by $g$ asymptotically, i.e., for some $C_1, C_2 >0$ and some positive integer $n_0$ we have $C_1g(n) \le f(n) \le C_2 g(n)$ for $n > n_0$. Throughout, $\phi(t) = e^{-t^2/2}/\sqrt{2\pi}$, $\Phi(t) = \int_{-\infty}^t \phi(t) \de t$. We also
use $x\vee y = \max(x,y)$ and $x\wedge y = \min(x,y)$.

%%%%%%%%%%%%%%%%%%%%%%%%%%%%
\section{Main results}\label{sec:main results}
We present two algorithms for online control of $\FDR$ and $\mFDR$. The first algorithm sets the significance levels
for tests based on total number of discoveries made so far. We name this algorithm \LOND which stands for (significance) Levels based On Number of Discoveries. The second algorithm sets the significance level at each step based on the time the last discovery has occurred. We name this algorithm \LORD which stands for (significance) Levels based On Recent Discovery.

\subsection{{\sc \LOND} algorithm }
We choose any sequence of nonnegative numbers $\underline{\beta} = (\beta_i)_{i=1}^\infty$, such that $\sum_{i=1}^\infty \beta_i = \alpha$.
The values of significance levels $\alpha_i$ are chosen as follows:
\begin{eqnarray}\label{eq:RuleII}
\alpha_i = \beta_i (D(i-1)+1)\,.
\end{eqnarray}
\begin{thm}\label{thm:FDRII}
Suppose that conditional on $D(i-1)$, we have
\begin{eqnarray}\label{eq:condition}
%\forall \theta\in \Theta,\quad \E_\theta(F^\theta_i -\alpha_i \vert D{(i-1)} )\le 0\,.
\forall \theta\in \Theta,\quad \prob_{\theta_i=0}(T_i = 1| D(i-1)) \le \E(\alpha_i|D(i-1))\,,
\end{eqnarray}
with $T_i$ given by equation~\eqref{eq:Ti}.
Then rule~\eqref{eq:RuleII} controls $\FDR$ and $\mFDR$ at level less than or equal to $\alpha$, i.e., for all $n \ge 1$, $\FDR(n) \le \alpha$ and $\mFDR(n) \le \alpha$.
%
%\begin{eqnarray}
%\FDR(n) &\le& \alpha \,,\label{eq:FDRII}
%\end{eqnarray}
%
%for all $n\ge 1$.

%Further, if rule~\eqref{eq:RuleII} is conducted with $\tilde{\beta}_i = \beta_i/(\sum_{j=1}^i \frac{1}{j})$ in place of $\beta_i$ in~\eqref{eq:RuleII}, it always controls $\FDR(n)$ at level less than or equal to $\alpha$, for all $n\ge 1$, without requiring~\eqref{eq:condition}.
%
%In addition, rule~\eqref{eq:RuleII} conducted with sequence $\ubeta$ always controls $\mFDR(n)$ at level less than or equal to $\alpha$, for all $n\ge 1$, without requiring~\eqref{eq:condition}.
\end{thm}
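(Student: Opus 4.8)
The plan is to control $\FDR(n)$ directly by bounding the expected false discovery proportion, and to deduce $\mFDR(n)$ control as a byproduct. The key structural observation is that the significance level at step $i$ is $\alpha_i = \beta_i(D(i-1)+1)$, so the denominator $D(i-1)+1$ is designed precisely to cancel against the $D(n)\vee 1$ appearing in the definition of $\FDP$. First I would write $\FDP^\theta(n) = V^\theta(n)/(D(n)\vee 1) = \sum_{i=1}^n F_i^\theta/(D(n)\vee 1)$ and attempt to peel off one term at a time. The natural move is to bound each false-discovery indicator using the hypothesis~\eqref{eq:condition}, which says that conditional on $D(i-1)$, the probability of a false rejection at step $i$ is at most $\E(\alpha_i\mid D(i-1)) = \beta_i\,\E(D(i-1)+1\mid D(i-1)) = \beta_i(D(i-1)+1)$.

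The main technical maneuver, and what I expect to be the central obstacle, is handling the correlation between the numerator term $F_i^\theta$ and the denominator $D(n)\vee 1$, since $D(n)$ depends on all decisions including the one at step $i$. The trick I would use is a monotonicity/substitution argument: replace $D(n)\vee 1$ in the denominator by $D(i)$ (or by $D(i-1)+1$), using that when a discovery occurs at step $i$ we have $T_i=1$ and hence $D(n) \ge D(i) \ge D(i-1)+1$. Concretely, on the event $F_i^\theta = 1$ we have $D(n)\vee 1 \ge D(i-1)+1$, so
\begin{align*}
\E_\theta\Big[\frac{F_i^\theta}{D(n)\vee 1}\Big] \le \E_\theta\Big[\frac{F_i^\theta}{D(i-1)+1}\Big]\,.
\end{align*}
Now I would condition on $D(i-1)$: since $D(i-1)$ is measurable with respect to the past and the denominator $D(i-1)+1$ is therefore known given $D(i-1)$, the tower property gives
\begin{align*}
\E_\theta\Big[\frac{F_i^\theta}{D(i-1)+1}\Big] = \E_\theta\Big[\frac{\prob_{\theta_i=0}(T_i=1\mid D(i-1))}{D(i-1)+1}\Big] \le \E_\theta\Big[\frac{\beta_i(D(i-1)+1)}{D(i-1)+1}\Big] = \beta_i\,,
\end{align*}
where the inequality is exactly condition~\eqref{eq:condition} and the fact that $F_i^\theta$ can only be nonzero when $H_i$ is truly null.

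Summing over $i$ from $1$ to $n$ then yields $\E_\theta(\FDP^\theta(n)) \le \sum_{i=1}^n \beta_i \le \sum_{i=1}^\infty \beta_i = \alpha$, and taking the supremum over $\theta\in\Theta$ gives $\FDR(n)\le\alpha$. For the $\mFDR$ bound I would run a parallel but simpler argument without the ratio inside the expectation: bound $\E_\theta(V^\theta(n)) = \sum_i \E_\theta(F_i^\theta)$ using the same conditioning to get $\E_\theta(F_i^\theta) \le \beta_i\,\E_\theta(D(i-1)+1)$, and then compare the resulting sum against $\E_\theta(D(n))$. The delicate points to verify carefully are that $F_i^\theta$ is indeed dominated by the conditional rejection probability under the null (so that condition~\eqref{eq:condition} applies only to truly null $H_i$), and that the inequality $D(n)\vee 1 \ge D(i-1)+1$ on the event of a discovery at step $i$ is used in the correct direction; getting the measurability bookkeeping right in the conditioning step is where the proof must be most careful.
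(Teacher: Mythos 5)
Your proposal is correct and takes essentially the same route as the paper's proof: your substitution of $D(n)\vee 1$ by $D(i-1)+1$ on the event $\{F_i^\theta=1\}$, followed by conditioning on $D(i-1)$ (where $\alpha_i$ is deterministic) and invoking condition~\eqref{eq:condition}, is exactly the paper's argument for $\FDR(n)\le\alpha$, merely written by replacing the denominator directly instead of multiplying and dividing by $\alpha_i$. The $\mFDR$ part also matches the paper: sum the conditional bound to get $\E_\theta(V^\theta(n))\le\sum_{i=1}^n\beta_i\,\E_\theta(D(i-1)+1)$ and compare with $\alpha\,(\E_\theta(D(n))+1)$.
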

%
%We can state assumption~\eqref{eq:condition} as follows. For all $i\ge 1$, conditional on $D(i-1)$ we have
%%
%\begin{eqnarray*}
%\prob_{\theta_i=0}(T_i = 1| D(i-1)) \le \E(\alpha_i|D(i-1))\,,
%\end{eqnarray*}
%%
%with $T_i$ given by equation~\eqref{eq:Ti}.
\begin{coro}
If the $p$-values are independent, then rule~\eqref{eq:RuleII} controls $\FDR$ and $\mFDR$ at level less than or equal to $\alpha$.
\end{coro}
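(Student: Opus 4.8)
The plan is to obtain the corollary as an immediate consequence of Theorem~\ref{thm:FDRII}: that theorem already guarantees control of both $\FDR$ and $\mFDR$ whenever the conditional inequality~\eqref{eq:condition} holds, so the entire content of the corollary reduces to verifying that \emph{independence} of the $p$-values forces~\eqref{eq:condition}. The proof is thus a verification of a hypothesis, not a re-derivation of the FDR bound.

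The central ingredient is a measurability fact about the threshold. By construction the significance level $\alpha_i = \beta_i(D(i-1)+1)$ depends on the past only through $D(i-1) = T_1 + \cdots + T_{i-1}$. Unwinding the recursion in~\eqref{eq:Ti} and~\eqref{eq:RuleII}, each earlier decision $T_j = \ind\{p_j \le \alpha_j\}$ has $\alpha_j$ determined by $D(j-1)$; proceeding inductively from $\alpha_1 = \beta_1$, one sees that $\alpha_i$, and hence $D(i-1)$, is a deterministic function of $p_1,\dots,p_{i-1}$ alone. In particular, conditioning on $D(i-1)$ fixes $\alpha_i$ to a constant.

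First I would fix an arbitrary $\theta \in \Theta$; condition~\eqref{eq:condition} concerns the situation in which $H_i$ is a true null, i.e.\ $\theta_i = 0$, so that $p_i \sim \cU([0,1])$. Under the independence hypothesis $p_i$ is independent of $(p_1,\dots,p_{i-1})$, hence of $D(i-1)$ and of the $D(i-1)$-measurable quantity $\alpha_i$. Conditioning on $D(i-1)$ therefore pins $\alpha_i$ to a constant without disturbing the uniform law of $p_i$, giving
$$\prob_{\theta_i=0}(T_i=1\mid D(i-1)) = \prob_{\theta_i=0}(p_i \le \alpha_i \mid D(i-1)) = \alpha_i \wedge 1 \le \alpha_i = \E(\alpha_i \mid D(i-1)),$$
where the final equality uses that $\alpha_i$ is $D(i-1)$-measurable. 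This is precisely~\eqref{eq:condition}; since $\theta$ was arbitrary the inequality holds for all $\theta \in \Theta$, and invoking Theorem~\ref{thm:FDRII} yields $\FDR(n) \le \alpha$ and $\mFDR(n) \le \alpha$ for every $n \ge 1$.

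There is essentially no hard step once the framing is correct; the only point demanding care is exactly what the corollary isolates, namely that independence---and nothing weaker---decouples the current null $p$-value $p_i$ from the data-dependent threshold $\alpha_i$. Without it, conditioning on $D(i-1)$ could skew the law of $p_i$ away from uniform and destroy the inequality. The truncation $\alpha_i \wedge 1$ arises because $\alpha_i = \beta_i(D(i-1)+1)$ may exceed $1$; it only strengthens the bound and so requires no separate treatment.
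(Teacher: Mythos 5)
Your proof is correct and matches the paper's intended argument exactly: the paper states this corollary as an immediate consequence of Theorem~\ref{thm:FDRII} (remarking that independence is ``the simplest case where Condition~\eqref{eq:condition} holds true''), and the measurability fact you isolate---that $\alpha_i$ is deterministic given $D(i-1)$, so that under independence $\prob_{\theta_i=0}(T_i=1\mid D(i-1)) = \alpha_i \wedge 1 \le \E(\alpha_i\mid D(i-1))$---is precisely the fact the paper itself invokes in the proof of Theorem~\ref{thm:FDRII}. Your additional care about the truncation $\alpha_i\wedge 1$ and about conditioning on the coarser $\sigma$-algebra generated by $D(i-1)$ rather than on $(p_1,\dots,p_{i-1})$ is sound and slightly more explicit than the paper.
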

Note that we do not require $p$-values to be independent, although it gives the simplest case where Condition~\eqref{eq:condition} holds true. Moreover, in~\eqref{eq:condition}, we condition on the number of discoveries so far. This is in contrast to alpha-investing method~\cite{alpha-investing} that uses information on acceptance of the previous hypotheses $(T_1, \dotsc, T_{i-1})$ or adaptive testing
 in a batch sequential arrivals (see e.g.~\cite{lehmacher1999adaptive}) that exploits information on the observed $z$-statistics.  
%We prove the control of $\mFDR$ here as it helps the reader to develop insight on the method. Proof of the claims of $\FDR$ are %deferred to section~\ref{} which requires a more involved analysis.
%
%\begin{proof}[Proof (Theorem \ref{thm:FDRII}: $\mFDR$ claim)]
%\end{proof}
%
\begin{remark}
The update rule $\alpha_i = \beta_i (D(i-1)\vee 1)$ also controls $\FDR$, but leads to a slightly smaller number of 
discoveries. For this rule, the following holds true which is a more stringent control over $\mFDR$.
\begin{eqnarray}
\sup_{\theta\in \Theta} \, \frac{\E_\theta(V^\theta(n))}{\E_\theta(D(n) \vee 1)}\le \alpha\,.
\end{eqnarray}
\end{remark}

\subsection{{\sc \LORD} algorithm}
In the second algorithm the significance levels $\alpha_i$ are adapted to the time of last discovery, rather than
the number of discoveries so far. Concretely, choose any sequence of nonnegative numbers $\underline{\beta} = (\beta_i)_{i=1}^\infty$, such that $\sum_{i=1}^\infty \beta_i = \alpha$.
We then set
  $\alpha_i = \beta_i$ until a
discovery occurs. If $H_j$ is rejected, then the sequence is renewed by choosing $\alpha_{j+k} = \beta_k$, for $k\ge 1$, until we reach the next discovery. In other words, letting $\tau_i$ be the index of the most recent discovery before time step $i$,
$$\tau_i \equiv \max \Big\{\ell <i ,\, H_\ell \text{ is rejected} \Big\}\,,$$
(with $\tau_1 =0$) we set 
\begin{eqnarray}\label{eq:Rule}
\alpha_i = \beta_{i-\tau_i}\,.
\end{eqnarray}
In the next theorem we show that $\LORD$ controls $\mFDR_1(n)$ for all $n\ge 1$ and also controls $\FDR$ at every discovery.
\begin{thm}\label{thm:FDR}
Suppose that conditional on $\tau_{i-1}$, we have
\begin{eqnarray}\label{eq:condition1}
%\forall \theta\in \Theta,\quad \E_\theta(F^\theta_i -\alpha_i \vert \tau_{i-1} )\le 0\,.
\forall \theta\in \Theta,\quad \prob_{\theta_i = 0}(T_i =1 \vert \tau_{i-1} )\le \E(\alpha_i|\tau_{i-1})\,.
\end{eqnarray}
Then, the rule~\eqref{eq:Rule} controls $\mFDR$ to be less than or equal to $\alpha$, i.e., $\mFDR_1(n) \le \alpha$ for all $n \ge 1$.
%%
%\begin{eqnarray}
%\mFDR_1(n) &\le& \alpha\,,\label{eq:mFDR} 
%\end{eqnarray}
%%
Further, it controls $\FDR$ at every discovery. More specifically, letting $\tau_k$ be the time of $k$-th discovery, we have the following for all $k \ge 1$, 
\begin{eqnarray}
\sup_{\theta\in \Theta} \E_\theta(\FDP(\tau_k) \ind(\tau_k < \infty) ) \le \alpha\,.\label{eq:FDR}
\end{eqnarray}
\end{thm}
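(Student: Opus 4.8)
The plan is to control, for both conclusions, the expected number of false discoveries, exploiting that the \LORD significance levels reset at every discovery and that $\sum_{k\ge1}\beta_k=\alpha$. Write $V^\theta(n)=\sum_{i=1}^n\ind(\theta_i=0)T_i$, so that only the null coordinates contribute. For each null index $i$, the hypothesis \eqref{eq:condition1} together with the tower property gives $\E_\theta(T_i)=\E_\theta[\prob_{\theta_i=0}(T_i=1\mid\tau_{i-1})]\le\E_\theta[\E(\alpha_i\mid\tau_{i-1})]=\E_\theta(\alpha_i)$. Hence $\E_\theta(V^\theta(n))\le\sum_{i=1}^n\E_\theta(\alpha_i)=\E_\theta\big[\sum_{i=1}^n\beta_{i-\tau_i}\big]$.

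The heart of the matter is a deterministic budget bound on this last sum. The times $1,\dots,n$ split into consecutive epochs delimited by the discoveries: within the epoch following a discovery at time $\tau_i$, the quantity $i-\tau_i$ runs through distinct values $1,2,3,\dots$, so the $\beta$-mass charged to any single epoch is at most $\sum_{k\ge1}\beta_k=\alpha$. Since there are $D(n)$ discoveries, at most $D(n)+1$ epochs intersect $\{1,\dots,n\}$, giving $\sum_{i=1}^n\beta_{i-\tau_i}\le(D(n)+1)\alpha$ almost surely. Taking expectations yields $\E_\theta(V^\theta(n))\le\alpha(\E_\theta(D(n))+1)$, i.e. $\mFDR_1(n)\le\alpha$ after taking the supremum over $\theta$; this settles the first claim.

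For the control at discoveries \eqref{eq:FDR}, note first that on $\{\tau_k<\infty\}$ we have $D(\tau_k)=k$, so $\FDP^\theta(\tau_k)=V^\theta(\tau_k)/k$ and it suffices to prove $\E_\theta(V^\theta(\tau_k)\ind(\tau_k<\infty))\le\alpha k$. Bounding $V^\theta(\tau_k)\ind(\tau_k<\infty)\le\sum_{j=1}^k\ind(B_j)$, where $B_j$ is the event that the $j$-th discovery occurs and rejects a true null, it is enough to show $\prob_\theta(B_j)\le\alpha$ for every $j$. I would establish this epoch by epoch: conditioning on the $(j-1)$-th discovery occurring at time $s$ (with $s=0$ when $j=1$, using $\tau_1=0$), the $j$-th discovery is the first rejection among $H_{s+1},H_{s+2},\dots$ tested at the reset levels $\beta_1,\beta_2,\dots$, and it is a false discovery only if some true null $H_m$ is this first rejection. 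Summing the probabilities of these disjoint events and using \eqref{eq:condition1} to replace the rejection probability of a null $H_m$ by $\alpha_m=\beta_{m-s}$ on the survival event, one obtains $\prob_\theta(B_j)\le\sum_{m>s,\ \theta_m=0}\beta_{m-s}\le\sum_{k\ge1}\beta_k=\alpha$; summing over $j=1,\dots,k$ gives the claim.

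The step I expect to be delicate is this last application of \eqref{eq:condition1} inside an epoch. The hypothesis conditions only on $\tau_{i-1}$, whereas the survival event $\{T_{s+1}=\dots=T_{m-1}=0\}$ and the discovery count carry strictly more information (in particular the current outcome $T_{m-1}$ and the fact that exactly $j-1$ discoveries have occurred so far), and it is precisely on this finer event that $\alpha_m$ equals $\beta_{m-s}$. Reconciling the coarse conditioning of the assumption with the per-epoch survival events, so that the single-step inequality can be chained along an epoch, is the main obstacle; it is transparent in the independent case of the Corollary, where for a null $H_m$ one has $\prob(T_m=1\mid\text{past})=\alpha_m$ exactly and the survival factors are genuine probabilities bounded by $1$. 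A secondary technical point is the bookkeeping for $\{\tau_k=\infty\}$: the truncation $\ind(\tau_k<\infty)$ must be carried throughout, and one checks that the epoch-budget estimate still holds when the last epoch is infinite, so that at most $k$ epochs are charged and the bound $\alpha k$ survives.
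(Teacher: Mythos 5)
Your proposal is correct and follows essentially the same route as the paper: the mFDR part is the identical tower-property-plus-epoch-budget argument, and your events $B_j$ are exactly the paper's variables $X_i\,\ind(\tau_i<\infty)$ (the $i$-th discovery occurs and is false), bounded by the same union bound over the epoch following $\tau_{i-1}$. The delicate step you flag is handled in the paper at precisely the same level of rigor as your sketch: it decomposes into the disjoint events that the first rejection after $\tau_{i-1}$ occurs at time $\ell$, notes that on each such event the rule~\eqref{eq:Rule} forces $\alpha_\ell=\beta_{\ell-\tau_{i-1}}$, and then invokes uniformity of the null $p$-value to bound each term by $\beta_{\ell-\tau_{i-1}}$ --- that is, it effectively works under the independence assumption of the corollary rather than deriving the chained bound from condition~\eqref{eq:condition1} alone.
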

\begin{coro}
If the $p$-values are independent, then rule~\eqref{eq:Rule} controls $\FDR$ and $\mFDR$ at level less than or equal to $\alpha$.
\end{coro}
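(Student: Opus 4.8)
The plan is to recognize this as a direct corollary of Theorem~\ref{thm:FDR}: independence of the $p$-values is precisely the structural assumption needed to verify hypothesis~\eqref{eq:condition1}, after which the theorem applies verbatim. The only genuine content is checking that, under independence, the conditional probability of a null rejection at step $i$ equals the conditional expectation of the significance level $\alpha_i$.

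First I would introduce the natural filtration $\mathcal{F}_{i-1} = \sigma(p_1,\dots,p_{i-1})$ and argue that the $\LORD$ significance level in~\eqref{eq:Rule} is $\mathcal{F}_{i-1}$-measurable. Indeed $\alpha_i = \beta_{i-\tau_i}$ with $\tau_i = \max\{\ell<i : H_\ell \text{ rejected}\}$ determined by $T_1,\dots,T_{i-1}$; since each $T_\ell = \ind(p_\ell\le\alpha_\ell)$ and $\alpha_\ell$ is itself built recursively from strictly earlier decisions, an induction on $\ell$ shows $\tau_i$, and hence $\alpha_i$, is a deterministic function of $p_1,\dots,p_{i-1}$. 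I would also record that $0\le\alpha_i\le 1$, which follows from $\beta_k\ge 0$ and $\sum_k\beta_k = \alpha\le 1$, so that $\beta_k\le\alpha\le 1$ for every $k$.

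The key computation is then immediate via the tower property. Under independence $p_i$ is independent of $\mathcal{F}_{i-1}$, and under $\theta_i=0$ it is uniform on $[0,1]$; since $\alpha_i$ is $\mathcal{F}_{i-1}$-measurable and lies in $[0,1]$,
\begin{eqnarray*}
\prob_{\theta_i=0}(T_i=1\mid \mathcal{F}_{i-1}) = \prob_{\theta_i=0}(p_i\le\alpha_i\mid \mathcal{F}_{i-1}) = \alpha_i\,.
\end{eqnarray*}
Because $\tau_{i-1}$ is $\mathcal{F}_{i-1}$-measurable, conditioning further on $\tau_{i-1}$ and invoking the tower property gives
\begin{eqnarray*}
\prob_{\theta_i=0}(T_i=1\mid \tau_{i-1}) = \E(\alpha_i\mid\tau_{i-1})\,,
\end{eqnarray*}
so condition~\eqref{eq:condition1} holds for every $\theta\in\Theta$ (in fact with equality). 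Applying Theorem~\ref{thm:FDR} then yields $\mFDR_1(n)\le\alpha$ for all $n\ge 1$ together with the at-every-discovery bound~\eqref{eq:FDR}, which is exactly the asserted $\mFDR$ and $\FDR$ control.

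I expect no serious obstacle: the statement is a true corollary and the argument is short. The one point deserving care — and the only place independence enters — is the adaptedness/measurability step justifying that $\alpha_i$ depends on the data solely through $p_1,\dots,p_{i-1}$, so that it is independent of $p_i$ and can be pulled outside the conditional probability. Everything after that is the tower property together with the uniformity of null $p$-values.
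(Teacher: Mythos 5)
Your proposal is correct and matches the paper's intended argument: the paper states this corollary without separate proof precisely because, under independence, $\alpha_i$ is a deterministic function of $p_1,\dots,p_{i-1}$ and hence condition~\eqref{eq:condition1} holds (with equality), after which Theorem~\ref{thm:FDR} applies verbatim. Your filtration/tower-property verification, including the observation that Theorem~\ref{thm:FDR} delivers $\mFDR_1(n)\le\alpha$ together with the at-every-discovery bound~\eqref{eq:FDR}, is exactly the route the paper takes.
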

\begin{remark}\label{rem:mFDR_strong}
For the update rule~\eqref{eq:Rule}, the following holds true which is stronger than the control of $\mFDR_1(n)$:
\begin{eqnarray}
\sup_{\theta\in \Theta} \, \frac{\E_\theta(V^\theta(n))}{\E_\theta(D(n-1) + 1)}\le \alpha\,.
\end{eqnarray}
\end{remark}
We refer to Section~\ref{proof-thm:FDR} for the proof of Theorem~\ref{thm:FDR} and Remark~\ref{rem:mFDR_strong}.
%%%%%%%%%%%%%%%%%%%%%%%%%%%%%%%%%%
\subsection{Online control of FDR under dependency}\label{sec:dependence}
The BH procedure described in the introduction controls $\FDR$ when the $p$-values are independent. In~\cite{benjamini2001control}, Benjamini
and Yekutieli introduced a property called \emph{positive regression dependency from a subset} $I_0$ (PRDS on $I_0$) to capture the positive dependency structure among the test statistics. They relaxed the independence assumption on $p$-values by showing that if the joint distribution of the test statistics is PRDS on the subset of test statistics corresponding to true null hypotheses, then BH controls $\FDR$. (See Theorem 1.3 in~\cite{benjamini2001control}.) Further, they proved that BH controls $\FDR$ under general dependency if its threshold is adjusted by replacing $\alpha$ with $\alpha/(\sum_{i=1}^m \frac{1}{i})$ in equation~\eqref{eq:iBH}. Here, we prove an analogous result for \LOND algorithm for online controlling of $\FDR$.

\begin{thm}\label{thm:dependency-FDR}
If $\LOND$ is conducted with $\tilde{\beta}_i = \beta_i/(\sum_{j=1}^i \frac{1}{j})$ in place of $\beta_i$ in~\eqref{eq:RuleII}, it always controls $\FDR(n)$ at level less than or equal to $\alpha$, for all $n\ge 1$, without requiring condition~\eqref{eq:condition}.
\end{thm}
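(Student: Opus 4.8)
The plan is to follow the general-dependency argument of Benjamini and Yekutieli~\cite{benjamini2001control}, adapting its combinatorial core to the sequential structure of \LOND. The starting point is to bound the false discovery proportion \emph{pointwise} rather than in expectation. Writing $\FDP(n) = \sum_{i=1}^n F^\theta_i/(D(n)\vee 1)$, I observe that whenever $F^\theta_i = 1$ there is a genuine discovery at step $i$, so $D(n) \ge D(i-1)+1 \ge 1$ for every $n \ge i$. Hence $1/(D(n)\vee 1) \le 1/(D(i-1)+1)$ on that event, and since the adjusted rule reads $\alpha_i = \tilde{\beta}_i(D(i-1)+1)$ with $F^\theta_i = \ind(\theta_i=0)\ind(p_i \le \alpha_i)$, I obtain the deterministic bound
\begin{equation*}
\FDP(n) \le \sum_{i\,:\,\theta_i = 0}\frac{\ind\big(p_i \le \tilde{\beta}_i(D(i-1)+1)\big)}{D(i-1)+1}\,.
\end{equation*}
Taking expectations reduces the theorem to showing that each null index $i$ contributes at most $\beta_i$.

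For a fixed null index $i$, I would expand the expectation by splitting over the value $d = D(i-1) \in \{0,\dots,i-1\}$, turning the per-index term into $\sum_{d=0}^{i-1}(d+1)^{-1}\,\prob\big(p_i \le t_d,\, D(i-1)=d\big)$ with the nested thresholds $t_d = \tilde{\beta}_i(d+1)$. The key combinatorial step is to decompose each nested event into disjoint shells, $\{p_i \le t_d\} = \bigsqcup_{e=0}^{d}\{t_{e-1} < p_i \le t_e\}$ (with $t_{-1}=0$), substitute, and exchange the order of the two summations. After the swap, the inner sum over $d \ge e$ of the events $\{D(i-1)=d\}$ collapses into $\{D(i-1)\ge e\}$, and using the monotonicity $1/(d+1) \le 1/(e+1)$ for $d \ge e$ I arrive at the bound $\sum_{e=0}^{i-1}(e+1)^{-1}\,\prob\big(t_{e-1} < p_i \le t_e\big)$.

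The final step is where the adjustment pays off and where the dependence is neutralized. Because the thresholds are equally spaced with gap $\tilde{\beta}_i$, the marginal uniformity of a null $p$-value gives $\prob(t_{e-1} < p_i \le t_e) \le \tilde{\beta}_i$ for every shell; crucially this uses only the marginal law of $p_i$ together with the trivial inequality $\prob(\,\cdot\,,\,D(i-1)\ge e) \le \prob(\,\cdot\,)$, so no conditional statement such as~\eqref{eq:condition} is ever invoked. The per-index term is then at most $\tilde{\beta}_i \sum_{e=0}^{i-1}(e+1)^{-1} = \tilde{\beta}_i \big(\sum_{j=1}^i 1/j\big) = \beta_i$ by the very definition of $\tilde{\beta}_i$. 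Summing over the null indices $i \le n$ yields $\FDR(n) = \E\,\FDP(n) \le \sum_{i\ge 1}\beta_i = \alpha$.

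I expect the main obstacle to be the bookkeeping in the summation swap and, in particular, verifying that the harmonic factor generated by the shell decomposition matches \emph{exactly} the correction $\sum_{j=1}^i 1/j$ built into $\tilde{\beta}_i$; getting the index ranges and the endpoint shell ($e=0$, i.e.\ $p_i \le \tilde{\beta}_i$) right is what makes the cancellation clean. Everything else --- the pointwise $\FDP$ bound and the uniform shell estimate --- is robust to arbitrary dependence precisely because it avoids any conditioning on the past.
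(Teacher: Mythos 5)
Your proof is correct, and its combinatorial core is the same one the paper uses (both adapt Benjamini--Yekutieli's general-dependency argument): decompose the null $p$-value event into shells of width $\tbeta_i$ cut by the equally spaced thresholds $\tbeta_i(e+1)$, swap the two sums, pair the shell indexed by $e$ with the factor $1/(e+1)$, and let marginal uniformity of null $p$-values produce the harmonic sum $\sum_{j=1}^i 1/j$ that the adjustment $\tbeta_i=\beta_i/\bigl(\sum_{j=1}^i 1/j\bigr)$ exactly cancels. Where you genuinely differ is the entry point. The paper reaches the per-null-index reduction through the counting lemma of~\cite{benjamini2001control} (Lemma~\ref{lem:dummy_ind}, $v\,\prob(\event_{v,u})=\sum_{i=1}^{n_0}\prob((p_i\le\alpha_i)\cap\event_{v,u})$) followed by a cascade of auxiliary events $\cC^{(i)}_{s,v,u}$, $\cC^{(i)}_{s,k}$, $\cC^{(i)}_{s}$ recording the discovery counts before time $i$ and in all of $\cH(n)$. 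You instead bound $\FDP(n)$ pointwise: on the event of a false discovery at step $i$ one has $D(n)\ge D(i-1)+1$, hence $F^\theta_i/(D(n)\vee 1)\le F^\theta_i/(D(i-1)+1)$, and linearity of expectation gives the per-index term directly. This buys a shorter, more transparent argument: it eliminates the counting lemma and the event bookkeeping, makes explicit that the only distributional input is the marginal uniformity of each null $p$-value (the joint event $\{D(i-1)\ge e\}$ is simply discarded, which is exactly why condition~\eqref{eq:condition} is never needed), and it sidesteps a small imprecision in the paper's write-up, where the index $s$ in $\cC^{(i)}_{s,v,u}$ is described as the number of \emph{false} discoveries before time $i$ yet is used as $D(i-1)$ in the threshold $\tbeta_i(s+1)$; in your version the conditioning is on $D(i-1)=d$ itself, for which the threshold identity $\alpha_i=\tbeta_i(d+1)$ is exact. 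From the shell decomposition onward the two proofs are step-for-step parallel (your shell probabilities are the paper's $q_{i,j,s}$ after its events are collapsed), and both close with $\FDR(n)\le\sum_i\beta_i\le\alpha$.
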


Theorem~\ref{thm:dependency-FDR} is proved in Section~\ref{proof:dependency-FDR}.

%%%%%%%%%%%%%%%%%%%%%%%%%%%%%%%%%%
\section{Comparison with alpha-investing}
\label{sec:comparison-alpha}
Alpha-investing was developed by Foster and Stine~\cite{alpha-investing} to control $\mFDR$ in an online multiple testing problem. The basic
idea is to treat the significance level $\alpha$ as a budget to be spent over the sequence of tests and the rule earns an increment
in its budget each time it rejects a hypothesis. More precisely, alpha-investing rules assume an initial budget
$W(0)$, and a rule for significance levels $\alpha_i$ as a function of the form
\begin{eqnarray}\label{eq:alpha-investing}
\alpha_j = f_{W(0)}(\{T_1, T_2, \dotsc, T_{i-1}\})\,.
\end{eqnarray}
Let $W(k)\ge 0$ denote the budget after $k$ tests. The outcomes of the tests change the available budget as follows:
\begin{eqnarray*}
W(j) - W(j-1) = \begin{cases}
\omega & \text { if } T_j = 1\,\\
-\alpha_j/(1-\alpha_j) & \text { if } T_j = 0\,.
\end{cases}
\end{eqnarray*}
Choosing $\omega = \alpha$ and $W(0)\le \alpha \eta$, 
alpha-investing rules control $\mFDR_\eta$ under the condition that
the budget stays nonnegative almost surely~\cite{alpha-investing}. 
Note that since alpha-investing proceeds sequentially, it might stop
the testing after some number of rejected hypotheses. 

It is worth noting that \LOND and \LORD are not $\alpha$-investing rules. To show this, consider the case that all $p$-values are
equal to one. \LOND and \LORD do not reject any of the hypotheses in this scenario. Hence, both of them set the significance levels  $\alpha_j = \beta_j$ for all $j$ (cf. rule~\eqref{eq:RuleII},~\eqref{eq:Rule}). 
The budget after $n$ iteration works out at
$$W(n) = W(0) -\sum_{j=1}^n \frac{\beta_j}{1-\beta_j}\,.$$
Given that $W(0) \le \alpha$ and $\sum_{j=1}^\infty \beta_j = \alpha$, the above budget can become negative for some value of $n$. This is not allowed in the alpha-investing method though.

There is no general guarantee that alpha-investing controls FDR. Theorem 2 in ~\cite{alpha-investing} discusses a special case that
the testing procedure stops after a deterministic number of rejections and assumes that the procedure has uniform control of $\mFDR_0$.
In addition, it is proved to control $\mFDR$ under the assumption that $\prob_{\theta_i=0}(T_i=1|T_1,\dotsc,T_{i-1})\le \alpha_i$ for $i\ge 1$.
In contrast, \LOND and \LORD procedures control $\FDR$ and $\mFDR$. In Section~\ref{sec:dependence} we further proposed an adjustment to \LOND that addresses arbitrary dependency among the test statistics.
Section~\ref{sec:discovery-rate} studies the rate of expected number
of discoveries made by our procedures. We are not aware of any
analogous analysis for alpha-investing rules.

Let us finally mention that the recent work of  Aharoni and Rosset \cite{generalized-alpha} introduce a very broad
class of online rules called \emph{generalized $\alpha$-investing}. We
believe that our \LOND algorithm fits within this framework. Note however that
\cite{generalized-alpha}  --again-- does not guarantee $\FDR$ control for generalized
$\alpha$-investing. (Instead it proves $\mFDR$ control, adapting the
arguments of \cite{alpha-investing}.)

%%%%%%%%%%%%%%%%%%%%%%%%%%%%%%%%%%%%%%%%%%%%%%%%% 
%%%%%%%%%%%%%%%%%%%%%%%%%%
\section{Discussion}
\subsection{Effect of domain knowledge}\label{sec:domain-knowledge}
\LOND algorithm sets the significance levels based on the number of discoveries made so far.
Therefore there is an inherent positive effect from previous discoveries onto the next significance levels.
In other words, a larger number of discoveries leads to larger significance levels for future tests and hence more discoveries.
This property of \LOND is particularly beneficial when the researcher has knowledge about the underlying domain.
In this case, she can order hypotheses such that those that are most likely to be rejected (i.e., truly non-null hypotheses) appear first. This yields a large number of discoveries at the very first steps whose effect is everlasting on the
next significance levels. 
%
%Likewise, \LORD algorithm sets the significance levels only based on the most recent discovery.
%Each time a discovery occurs, the sequence $\alpha_i$ is renewed. 
%Hence, if the non null hypotheses arrive in batches, then $\LORD$ procedure assigns a high significance level, namely $\beta_1$, 
%to them (except the first hypothesis in each batch) which translates to an increased statistical power. 

Likewise, \LORD can also leverage from the domain knowledge. It renews the sequence of significance  levels after each discovery.
Hence, if the truly non-null hypotheses arrive in batches, \LORD assigns a high significance level to them, namely $\beta_1$\footnote{Except for possibly the first hypothesis in the batch.}, which yields an increased statistical power. 
%This is expected to be
%the case in domain oriented inference. New discoveries usually lead to a better understanding of the underlying field and help to produce truly valid hypotheses. This is in contrast to the case wherein no domain expertise is involved and the valid hypotheses appear at random times. Since \LORD renews the sequence of significance levels \emph{after} each discovery, if the time between two consecutive valid hypotheses $H$ and $H'$ is large, the significance level is already shrunk to a small value by the time $H'$ arrives. This clarifies the point that \LORD benefits from the ordering the valid hypotheses appear in the stream.
 
%knowledge about the underlying science domain and thus the truly non null hypotheses arrive at random times.
%In this case, the sequence $\underline{\alpha} = (\alpha_i)_{i=1}^\infty$ will be renewed consistently.  
%{\bf modify: LORD under batches works better!}
In Section~\ref{sec:numerical}, we study the effect of domain knowledge via numerical simulations.

\subsection{Choosing sequence $\underline{\beta}$}
Both \LOND and \LORD algorithms involve a sequence $\underline{\beta} = (\beta_\ell)_{\ell=1}^\infty$ such that
$\sum_{\ell=1}^\infty \beta_\ell = \alpha$. Examples of such sequence are:
\begin{eqnarray*}
\beta_\ell &=& \frac{C(\alpha,\nu)}{\ell^a}\,,\\
%\beta_\ell &= &\begin{cases}
%\tC(\alpha,\nu)\,&\text{if } \ell=1\,\\
%\\
%\frac{\tC(\alpha,\nu)}{\ell \log^\nu(\ell)}&\text{otherwise}\,,
%\end{cases}
\beta_\ell &=& \frac{\tC(\alpha,\nu)}{\ell \log^\nu(\ell \vee 1)}\,,
\end{eqnarray*}
where $a, \nu >1$ and constants $C, \tC$ are chosen in a way to ensure $\sum_{\ell=1}^\infty \beta_\ell =\alpha$.

In case there is an upper bound $n$ on the number of hypotheses to be tested, such information can be exploited by choosing
$C$ or $\tC$ to satisfy $\sum_{\ell=1}^n \beta_\ell =\alpha$. This leads to larger values of $\beta_\ell$ and in turn larger significance
levels for tests. Further, parameter $\nu$ controls how fast the sequence $\underline{\beta}$ decreases. If there is prior information about the arrival times of truly false hypotheses (e.g., batch patterns, size of the batches, ...), this information can be used in choosing proper value of $\nu$. 
%%%%%%%%%%%%%%%%%%%%%%%%%%%%%%%%
\section{Total discovery rate}\label{sec:discovery-rate}
The proposed procedures control FDR and mFDR to be below $\alpha$. 
However, apart from FDR the total discovery rate is another important characteristic of 
a testing procedure. For instance, the procedure that never rejects a hypothesis achieves zero FDR but is useless.
 
%Benjamini-Hochberg~\cite{} procedure, referred to as BH henceforth, is the well-known testing method that controls FDR in the setting
%that all the $p$-values are given a priori and achieves higher statistical power than the Bonferroni.
In this section, we aim at comparing the total discovery rate for BH and our algorithms. We show that although our algorithms use only the outcomes of the previous tests in the stream, they have comparable performance to BH in terms of discovery rate.

%We first recall the BH procedure. Given $p$-values $p_1, p_2, \dotsc, p_n$ and a significance level $\alpha$, 
%follow the steps below:
%\begin{enumerate}
%\item Let $p_{(i)}$ be the $i$th $p$-value in the (increasing) sorted order, and define $p_{(0)} = 0$.  Further. let
%%
%$$i_{\BH} = \max\Big\{0\le i\le n:\, p_{(i)}\le \alpha i/n \Big\}\,.$$
%%
%\item Reject $H_j$ for every test where $p_j \le p_{i_{\BH}}$.
%\end{enumerate}
We focus on the mixture model wherein each null hypothesis $H_i$ is false with probability $\ve$ independently of other hypotheses. Further, the test statistics and hence the $p$-values are independent. Under null hypothesis $H_i$ we have $p_i \sim \cU([0,1])$ and under its alternative, $p_i $ is distributed according to a non-uniform distribution whose CDF is denoted by $F$.  Clearly, in this model the $p$-values have the same marginal distribution. 
Let $G(\cdot)$ be the CDF of this marginal distribution, i.e., $G(x) = (1-\ve) x + \ve F(x)$. For clarity in presentation, we assume $F(x)$, and hence $G(x)$, is continuous.

%We further let $\hG_m$ be the empirical CDF of the $p$-values under the mixture model. Ignoring the ties, $\hG_m(p_{(i)}) = i/m$. %Hence, BH threshold for rejecting the hypotheses is given by
%
%$$\hT_{\BH,m} \equiv  \arg \max \Big\{t:\, \hG_m(t) \ge t/\alpha \Big\}\,.$$
%
%This serves as a plug-in estimator for
%For continuos function $G$ we define
%%
%\begin{align}\label{eq:TBH}
%T_{\BH} \equiv  \arg \max \Big\{t:\, G(t) = t/\alpha \Big\}\,.
%\end{align}
%%
%\begin{propo}\label{pro:BH_Discovery}
%Consider the mixture model for the $p$-values with $G(\cdot)$ denoting the CDF of the marginal distribution.
%Let $D^{\BH}(n)$ denote the number of discoveries by applying $\BH$ rule to set of $p$-values $\{p_1, p_2, \dotsc, p_n\}$.
%Then,
%% 
%\begin{eqnarray}
%\lim_{n\to \infty} \frac{1}{n} D^{\BH}(n) = G(T_{\BH})\,,\quad \quad \quad \text{(almost surely).}\label{eq:as-TBH}
%\end{eqnarray}
%%
%Further,
%%
%\begin{eqnarray}
%\lim_{n\to \infty} \frac{1}{n} \E(D^{\BH}(n)) = G(T_{\BH})\,.\label{eq:L1-TBH}
%\end{eqnarray}
%\end{propo}
%Consequently, $D^{\BH}(n) = \Theta(n)$ almost surely. 
%Therefore, under the mixture model, we obtain
%%
%$$\E_\theta(D_\theta(m)) = m\, \prob(p_i \le T_{\BH}) = m\, G(T_{\BH})\,.$$
%%
%In words, the expected number of discoveries from $\cH(m)$ made by BH procedure is linear in $m$. 

%%%%%%%%%%%%%%%%%%%%%%%%%%%%%%%%%%%%%%%%%%%%%%%%%%%%
\subsection{Discovery rate of {\sc \LOND}}
In the next theorem we lower bound total discovery rate of \LOND for a specific choice of sequence $\underline{\beta} = (\beta_i)_{i=1}^\infty$. The goal here is to show that \LOND has nearly linear number of discoveries, namely $\E(D^\LOND(n)) = \Omega(n^{1-a})$, for arbitrary $0<a<1$ and all $n\ge1$. 
%The methodology however can be applied to other choices of $\underline{\beta}$.
%
\begin{thm}\label{thm:LOND_rate}
Consider the mixture model for the $p$-values with $G(x) = (1-\eps) x+\eps F(x)$ denoting the CDF of the marginal distribution.
Assume that for some constants $x_0$, $\lambda >0$ and $\kappa\in (0,1)$, we have $F(x) \ge \lambda x^\kappa$, for all $x < x_0$.  
Choose sequence $\underline(\beta) = (\beta_i)_{i=1}^\infty$ as $\beta_i = Ci^{-\nu}$ with $1<\nu<1/\kappa$ and $C$ set such that $\sum_{i=1}^\infty \beta_i = \alpha$. We further let $D^{\LOND}(n)$ denote the number of discoveries by applying \LOND algorithm with sequence $\underline{\beta}$ to set of $p$-values $(p_1, p_2, \dotsc, p_n)$.  Then, there exists constant $\widetilde{C}$ such that
for any fixed $0<\delta<1$ and any $n \ge 1$, the following holds.
\begin{align}\label{eq:LOND_rate}
\prob\Big\{ D^{\LOND}(n) \ge \Big(\frac{\delta n}{\tC}\Big)^{\frac{1-\kappa\nu}{1-\kappa}}\Big\} \ge 1-\delta\,.
\end{align}
\end{thm}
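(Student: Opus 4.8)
The plan is to reduce the high-probability lower bound on $D^{\LOND}(n)$ to an upper bound on the expected time needed to accumulate a prescribed number of discoveries, and then apply Markov's inequality. Write $\gamma := (1-\kappa\nu)/(1-\kappa) \in (0,1)$ for the exponent appearing in~\eqref{eq:LOND_rate}, and for an integer $m\ge 1$ let $\sigma_m := \inf\{i:\, D^{\LOND}(i) \ge m\}$ be the step at which the $m$-th discovery occurs. Since $D^{\LOND}$ is nondecreasing with unit increments, $D^{\LOND}(\sigma_m) = m$ and $\{\sigma_m \le n\} \subseteq \{D^{\LOND}(n) \ge m\}$. Hence if I can establish $\E[\sigma_m] \le \tC\, m^{1/\gamma}$, Markov's inequality gives $\prob(\sigma_m > n) \le \tC m^{1/\gamma}/n$, and the choice $m = (\delta n/\tC)^{\gamma}$ makes the right-hand side exactly $\delta$, which is precisely~\eqref{eq:LOND_rate} (rounding $m$ up to an integer only affects the constant).

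First I would record the conditional discovery probability. Write $q_i := \prob(T_i = 1\mid \mathcal F_{i-1})$, where $\mathcal F_{i-1}$ is the history. Given $D^{\LOND}(i-1) = d$, the threshold is $\alpha_i = \beta_i(d+1) = C i^{-\nu}(d+1)$, so $q_i = G(\alpha_i) \ge \eps F(\alpha_i)$. Using $F(x) \ge \lambda x^\kappa$ for $x < x_0$ (and $F(x)\ge \lambda x_0^\kappa$ otherwise), this yields $q_i \ge c\,(d+1)^\kappa i^{-\kappa\nu}$ with $c = \eps\lambda C^\kappa$, valid once $\beta_i(d+1) \le x_0$. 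A short computation shows that in the regime relevant below (namely $i\gtrsim \tau_{d-1}\sim d^{1/\gamma}$) this threshold condition always holds because $\nu > 1$; the finitely many early steps where it fails only produce discoveries faster and contribute an $O(1)$ additive term absorbed into the final constant.

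The core step is the bound $\E[\sigma_m] \le \tC\, m^{1/\gamma}$. Let $\tau_d := \sigma_d$ be the time of the $d$-th discovery ($\tau_0 := 0$), so $\sigma_m = \sum_{d=1}^m (\tau_d - \tau_{d-1})$. Conditioning on $\tau_{d-1} = s$, during the wait for the $d$-th discovery the count is frozen at $d-1$, and $q_i \ge c\, d^\kappa i^{-\kappa\nu}$ for $i > s$. Estimating the survival probabilities $\prod_{i=s+1}^{s+k}(1-q_i) \le \exp(-c\,d^\kappa\sum_{i=s+1}^{s+k} i^{-\kappa\nu})$ and summing over $k$, I would obtain $\E[\tau_d - \tau_{d-1}\mid \tau_{d-1}=s] \le C_1\, s^{\kappa\nu}/d^\kappa$ for a constant $C_1$. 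Taking expectations and applying Jensen's inequality to the concave map $x\mapsto x^{\kappa\nu}$ gives the recursion
\begin{equation*}
\E[\tau_d] \le \E[\tau_{d-1}] + \frac{C_1}{d^\kappa}\,\big(\E[\tau_{d-1}]\big)^{\kappa\nu}\,.
\end{equation*}
A direct induction then shows $\E[\tau_d] \le B\, d^{1/\gamma}$: the increment of $B d^{1/\gamma}$ is of order $d^{1/\gamma - 1} = d^{\kappa(\nu-1)/(1-\kappa\nu)}$, which carries exactly the same exponent in $d$ as the term $C_1 (B d^{1/\gamma})^{\kappa\nu}/d^\kappa$, so the induction closes provided $B \ge (C_1\gamma)^{1/(1-\kappa\nu)}$; the base case $\E[\tau_1] = O(1)$ follows since $\sum_i i^{-\kappa\nu}$ diverges. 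Setting $\tC = B$ finishes the argument.

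The main obstacle is the rigorous inter-arrival estimate $\E[\tau_d - \tau_{d-1}\mid\tau_{d-1}=s] \le C_1 s^{\kappa\nu}/d^\kappa$: because the per-step probabilities $c\,d^\kappa i^{-\kappa\nu}$ decay in $i$, the waiting time is not geometric, so I must (i) control the lower Riemann sum $\sum_{i=s+1}^{s+k} i^{-\kappa\nu}$ uniformly enough that $\sum_k \prod_{i}(1-q_i)$ is of order $s^{\kappa\nu}/d^\kappa$, and (ii) confirm that the contribution of large deviations $k \gg s$ is negligible, which hinges on $\kappa\nu < 1$ so that the cumulative hazard grows and discoveries occur almost surely. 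Care is also needed at the boundary of the regime $\beta_i(d+1)\le x_0$ and in the base case, but these affect only constants. Everything else—the reduction to $\sigma_m$, the Jensen step, the induction, and the final Markov bound—is routine once this estimate is in hand.
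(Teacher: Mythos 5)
Your proposal follows essentially the same route as the paper's proof: bound the survival probability of inter-discovery times via $\prod(1-G(\cdot))\le\exp(-\sum G(\cdot))$, use $F(x)\ge\lambda x^\kappa$ to get an expected inter-arrival bound of order $s^{\kappa\nu}/d^\kappa$ (the paper's $I_1$/$I_2$ split is exactly your regimes (i) and (ii)), apply Jensen's inequality to the concave map $x\mapsto x^{\kappa\nu}$ to obtain the recursion for $\E[\tau_d]$, solve it to get $\E[\tau_d]\le\tC d^{(1-\kappa)/(1-\kappa\nu)}$, and finish with Markov's inequality. The only cosmetic difference is that you close the recursion by direct induction, whereas the paper substitutes $\lambda_\ell=g_\ell+\Delta_\ell$ and invokes a separate lemma for $\Delta_\ell$ in its appendix; both handle the same estimate.
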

We refer to Section~\ref{proof:LOND_rate} for the proof of Theorem~\ref{thm:LOND_rate}. 

Equation\eqref{eq:LOND_rate} implies that
\begin{align*}
\E(D^\LOND(n)) \ge (1-\delta) \Big(\frac{\delta n}{\tC}\Big)^{\frac{1-\kappa\nu}{1-\kappa}}\,.
\end{align*}
Note that since $1<\nu<1/\kappa$ is arbitrary, the exponent $(1-\kappa\nu)/(1-\kappa)$ can be made arbitrarily close to 1. 
 It is therefore clear form equation~\eqref{eq:LOND_rate} that for any fixed $0<a<1$ and all $n\ge 1$, we have
 $$\E(D^\LOND(n)) = \Omega(n^{1-a})\,.$$

%We provide a methodology for lower bounding the asymptotic ratio of the number of discoveries by Algorithm I, namely,
%$$\underset{m\to \infty}{\lim\inf}\, \frac{1}{m}\E(D(m))\,.$$ 
%Since, we are interested in the asymptotic behavior of $\E(D(m))$ and the thresholds $\beta_m (D(m-1)+1)$ converge to zero as $m\to \infty$, only the behavior of $F(\cdot)$ near zero matters. In the following, we study the case that for some constants $x_0$, $\eps >0$ and $\kappa\in (0,1)$, we have $F(x) \ge \eps x^\kappa$, for all $x < x_0$.  
%
%$F(x) = x^{\kappa}$, for some fixed $\kappa \in (0,1)$, which also applies to functions with the same behavior in the vicinity of zero. The methodology can be used for other choices of $F(.)$ as well but the details of the derivations vary.

%%%%%%%%%%%%%%%%%%%%%%%%%%%%%%%%%%%%%%%%%%%%%%%%%%%%%%%%%%%%
\subsection{Discovery rate of {\sc \LORD}}
As we show in the following theorem, \LORD algorithm leads to a linear total discovery rate.
\begin{thm}\label{thm:EDm}
Consider the mixture model for the $p$-values with $G(x) = (1-\eps) x+\eps F(x)$ denoting the CDF of the marginal distribution.
We further let $D^{\LORD}(n)$ denote the number of discoveries by applying \LORD algorithm with sequence $\underline{\beta}$ to set of $p$-values $\{p_1, p_2, \dotsc, p_n\}$. Then, $\lim_{n\to \infty} ( D^\LOND(n)/n)$ exists almost surely and 
\begin{eqnarray}
\lim_{n\to \infty} \frac{1}{n}\, D^\LOND(n) \ge \cA(G,\underline{\beta})\,,
\quad \quad \cA(G,\underline{\beta}) \equiv \Big(\sum_{k=1}^\infty e^{-\sum_{\ell=1}^k G(\beta_\ell)} \Big)^{-1}\,.
\end{eqnarray}
Further,
\begin{eqnarray}\label{eq:LORD_E}
\lim_{n\to \infty} \frac{1}{n}\, \E(D^\LOND(n)) \ge \cA(G,\underline{\beta})\,.
\end{eqnarray}
\end{thm}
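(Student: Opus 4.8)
The decisive structural fact is that, unlike \LOND, the \LORD rule resets its significance schedule to $(\beta_1,\beta_2,\dots)$ immediately after every rejection, and this schedule depends on the past \emph{only} through the time elapsed since the last discovery. Combined with the independence of the $p$-values, this turns the sequence of discovery times into a (possibly terminating) renewal process, and the whole proof is an exercise in renewal theory. So the plan is: set up the regenerative structure, compute the inter-discovery gap law, and read off the rate.

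First I would introduce the successive discovery times $0=\tau_0<\tau_1<\tau_2<\cdots$ and the gaps $X_m=\tau_m-\tau_{m-1}$. Right after a discovery at $\tau_{m-1}$, the levels applied at steps $\tau_{m-1}+1,\tau_{m-1}+2,\dots$ are $\beta_1,\beta_2,\dots$, while the corresponding $p$-values are independent of the entire history up to $\tau_{m-1}$ and each has marginal CDF $G$. Hence the $X_m$ are i.i.d., with tail and mass functions
\[
\prob(X_m>j)=\prod_{\ell=1}^{j}\bigl(1-G(\beta_\ell)\bigr),
\qquad
\prob(X_m=k)=G(\beta_k)\prod_{\ell=1}^{k-1}\bigl(1-G(\beta_\ell)\bigr),
\]
so that $\E(X_1)=\sum_{j\ge 0}\prod_{\ell=1}^{j}(1-G(\beta_\ell))$. (Note $X_1=\tau_1$ already has this law, since the process starts with the fresh schedule, so there is no delayed-renewal complication.)

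Next I would identify $D^{\LORD}(n)$ with the renewal counting function $N(n)=\max\{m:\tau_m\le n\}$ and invoke the strong law for renewal processes, $D^{\LORD}(n)/n\to 1/\E(X_1)$ almost surely (with the convention $1/\infty=0$), which is exactly the claimed almost-sure existence of the limit. Bounded convergence, using $0\le D^{\LORD}(n)/n\le 1$ (or the elementary renewal theorem), transfers this to $\E(D^{\LORD}(n))/n\to 1/\E(X_1)$, giving the expectation statement. The lower bound then follows from the elementary inequality $1-x\le e^{-x}$:
\[
\E(X_1)=\sum_{j\ge 0}\prod_{\ell=1}^{j}\bigl(1-G(\beta_\ell)\bigr)
\le \sum_{j\ge 0}e^{-\sum_{\ell=1}^{j}G(\beta_\ell)}
=\cA(G,\underline{\beta})^{-1},
\]
so $1/\E(X_1)\ge\cA(G,\underline{\beta})$, and both displayed inequalities drop out.

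The main obstacle is the regime of infinite-mean gaps. Because $\underline{\beta}$ is summable, it is genuinely possible that $\sum_\ell G(\beta_\ell)<\infty$, in which case $\prod_\ell(1-G(\beta_\ell))>0$, i.e.\ with positive probability \emph{no further discovery ever occurs} and the renewal process terminates; then $D^{\LORD}(n)/n\to 0$. One must therefore check that the renewal SLLN and the elementary renewal theorem deliver the correct (possibly zero) limit in this transient/terminating case, and confirm that $\cA=0$ there so the bound remains consistent. The bound is informative precisely when $\sum_\ell G(\beta_\ell)=\infty$; since $\beta_\ell$ is summable this forces the content into the behaviour of the non-null CDF $F$ near the origin (enough mass so that $\sum_\ell F(\beta_\ell)=\infty$, e.g.\ $F(x)\gtrsim x^{\kappa}$ with $\kappa<1$ as in Theorem~\ref{thm:LOND_rate}), so the only quantitative step beyond the renewal machinery is verifying divergence of $\sum_\ell G(\beta_\ell)$ under the relevant assumptions.
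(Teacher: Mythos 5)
Your proposal follows essentially the same route as the paper's own proof: the paper likewise observes that under rule~\eqref{eq:Rule} the inter-discovery times are i.i.d.\ under the mixture model (so the discovery process is a renewal process), invokes the strong law of large numbers for renewal processes to get $D^{\LORD}(n)/n \to 1/\E(X_1)$ almost surely, bounds the mean gap via $1-x\le e^{-x}$, and cites the elementary renewal theorem for the statement in expectation. If anything, your write-up is more careful: you state the correct tail formula $\prob(X_m>j)=\prod_{\ell=1}^{j}(1-G(\beta_\ell))$, note the absence of a delayed-renewal complication, and explicitly treat the terminating case $\sum_\ell G(\beta_\ell)<\infty$, where $\cA=0$ and the bound is vacuous; the paper does none of this.

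One bookkeeping discrepancy should be flagged. Your final identification $\sum_{j\ge 0}e^{-\sum_{\ell=1}^{j}G(\beta_\ell)}=\cA(G,\underline{\beta})^{-1}$ is off by the $j=0$ term: the sum defining $\cA^{-1}$ in the theorem starts at $k=1$, so your (correct) computation of $\E(X_1)$ actually yields $\E(X_1)\le 1+\cA^{-1}$, hence $\lim_{n} D^{\LORD}(n)/n \ge \cA/(1+\cA)$, slightly weaker than the claimed bound $\ge \cA$. Interestingly, the paper commits the mirror-image slip: it asserts $\prob(t_i\ge k)=\prod_{\ell=1}^{k}(1-G(\beta_\ell))$, which is really $\prob(t_i>k)$, and thereby undercounts $\mu=\E(t_i)$ by exactly $1$, which is how the claimed constant comes out. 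So your argument is exactly as sound as the paper's; a fully rigorous version of either gives the constant $\bigl(1+\sum_{k\ge 1}e^{-\sum_{\ell=1}^{k}G(\beta_\ell)}\bigr)^{-1}$ in place of $\cA$. Nothing qualitative changes: since $\cA^{-1}\ge e^{-G(\beta_1)}$ is bounded below, $\cA/(1+\cA)=\Theta(\cA)$, and the linear discovery rate in Corollary~\ref{cor:EDm} and the examples are unaffected.
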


\begin{coro}\label{cor:EDm}
Suppose that there exists a sequence $\underline{\beta} = (\beta_\ell)_{\ell=1}^\infty$ such that the following conditions hold true:
\begin{enumerate}
\item $\forall \ell,\, \beta_\ell \in (0,1)$ and $\sum _{\ell=1}^\infty \beta_\ell =\alpha\,.$
%\item We have 
%$$\sum_{k=0}^\infty e^{-\sum_{\ell=1}^k G(\beta_\ell)} \le 1/C\,.$$
\item There exist constants $L_0>0$ and $c >1$ such that for $\ell>L$, we have $G(\beta_\ell) > c/\ell$.
\end{enumerate}
Then, we have almost surely
\begin{align}
\lim_{n\to \infty} \frac{1}{n} D^\LOND(n) \ge \Big(L_0+ \frac{1}{(c-1)L_0^{c-1}}\Big)^{-1}\,.
\end{align}
Similarly,
\begin{align}
\lim_{n\to \infty} \frac{1}{n} \E(D^\LOND(n)) \ge \Big(L_0+ \frac{1}{(c-1)L_0^{c-1}}\Big)^{-1}\,.
\end{align}
\end{coro}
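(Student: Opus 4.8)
The plan is to obtain the Corollary directly from Theorem~\ref{thm:EDm}, which already supplies $\lim_{n\to\infty}\frac1n D^{\LOND}(n)\ge \cA(G,\underline{\beta})$ almost surely together with the matching lower bound for $\lim_{n\to\infty}\frac1n\E(D^{\LOND}(n))$, where $\cA(G,\underline{\beta})=\big(\sum_{k\ge1}e^{-\sum_{\ell=1}^k G(\beta_\ell)}\big)^{-1}$. Since $x\mapsto x^{-1}$ is decreasing, any upper bound $S\le B$ on the series $S:=\sum_{k\ge1}e^{-\sum_{\ell=1}^k G(\beta_\ell)}$ immediately yields $\cA(G,\underline{\beta})\ge B^{-1}$ and hence both displayed inequalities at once. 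So the whole content of the Corollary reduces to establishing the estimate $S\le L_0+\frac{1}{(c-1)L_0^{c-1}}$, and the almost-sure and the in-expectation conclusions then come for free from the two assertions of Theorem~\ref{thm:EDm}.

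To bound $S$ I would split the outer sum at the threshold $L_0$ provided by hypothesis~(2). For the head $k\le L_0$, every summand is at most $1$, because each $G(\beta_\ell)$ is nonnegative (indeed $\beta_\ell\in(0,1)$ by hypothesis~(1) and $G$ is a CDF), so the head contributes at most $L_0$, which is precisely the first term in the claimed bound. For the tail $k>L_0$, I would lower-bound the inner exponent using hypothesis~(2): from $G(\beta_\ell)>c/\ell$ for $\ell>L_0$, a comparison of the partial harmonic sum with a logarithm gives $\sum_{\ell=1}^k G(\beta_\ell)\ge c\ln k$, whence $e^{-\sum_{\ell=1}^k G(\beta_\ell)}\le k^{-c}$.

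The final step is to sum the tail by an integral comparison. Because $x\mapsto x^{-c}$ is decreasing and $c>1$, one has $\sum_{k>L_0}k^{-c}\le\int_{L_0}^\infty x^{-c}\,\de x=\frac{1}{(c-1)L_0^{c-1}}$, which is exactly the second term of the stated bound. Adding the two pieces gives $S\le L_0+\frac{1}{(c-1)L_0^{c-1}}$, and inverting completes the argument; no separate work is needed for the expectation version since it is inherited verbatim from Theorem~\ref{thm:EDm}.

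The step I expect to be most delicate is the inner estimate $\sum_{\ell=1}^k G(\beta_\ell)\ge c\ln k$. The hypothesis only controls $G(\beta_\ell)$ for $\ell>L_0$, so the harmonic sum effectively starts at $L_0+1$ rather than at $1$, and a naive integral comparison produces $c\ln(k/L_0)$ and therefore a looser tail constant such as $\frac{L_0+1}{c-1}$. Recovering exactly $\frac{1}{(c-1)L_0^{c-1}}=\int_{L_0}^\infty x^{-c}\,\de x$ requires choosing the integral-comparison endpoints carefully (and handling a non-integer $L_0$, e.g.\ by replacing it with $\lceil L_0\rceil$ in the discrete sums); this bookkeeping is where the estimate must be carried out honestly, but it is routine once the splitting and the $k^{-c}$ tail decay are in place.
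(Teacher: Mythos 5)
Your reduction to Theorem~\ref{thm:EDm} is the right (and clearly the intended) route: the corollary should follow by bounding the series $S=\sum_{k\ge1}e^{-\sum_{\ell=1}^k G(\beta_\ell)}$, the head estimate $\sum_{k\le L_0}e^{-\sum_{\ell\le k}G(\beta_\ell)}\le L_0$ is correct, and both the almost-sure and the in-expectation conclusions are indeed inherited verbatim from the theorem. The genuine gap is exactly the step you flagged, and it is not ``routine bookkeeping'': hypothesis~(2) controls $G(\beta_\ell)$ only for $\ell>L_0$, so for $k>L_0$ the best available estimate is
\begin{align*}
\sum_{\ell=1}^k G(\beta_\ell)\;\ge\; c\sum_{\ell=L_0+1}^{k}\frac{1}{\ell}\;\ge\; c\,\ln\frac{k+1}{L_0+1}\,,
\end{align*}
which gives the tail decay $e^{-\sum_{\ell\le k}G(\beta_\ell)}\le\bigl((L_0+1)/(k+1)\bigr)^{c}$, not $k^{-c}$. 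Summing, the tail contributes at most $\frac{L_0+1}{c-1}$, which scales \emph{linearly} in $L_0$, whereas the claimed term $\frac{1}{(c-1)L_0^{c-1}}$ tends to $0$ as $L_0$ grows; no choice of integral endpoints can bridge quantities with opposite dependence on $L_0$. In fact the constant in the statement is unattainable: take $c=2$, $L_0=10$, and a mixture with $F(x)=\sqrt{x}$, choosing $\beta_\ell$ tiny for $\ell\le 10$ and $\beta_\ell$ so that $G(\beta_\ell)\approx 2/\ell$ for $\ell>10$ (such $\beta_\ell$ are summable, so condition~(1) can be met). Then $\prod_{\ell=11}^{k}(1-2/\ell)=\frac{90}{k(k-1)}$, the mean inter-discovery time is $\mu\approx 11+90\sum_{k\ge 11}\frac{1}{k(k-1)}\approx 20$, and the true rate $1/\mu\approx 1/20$ falls below the claimed bound $\bigl(10+\tfrac{1}{10}\bigr)^{-1}\approx 1/10.1$. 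So the discrepancy is not an artifact of loose estimates; the stated constant appears to come from tacitly pretending $G(\beta_\ell)>c/\ell$ for \emph{all} $\ell\ge 1$ (note the paper supplies no proof of this corollary, so you are not deviating from a correct written argument --- but your proof cannot be completed as announced).

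What your splitting does prove, carried out honestly (with $L_0$ replaced by $\lceil L_0\rceil$ if non-integer), is
\begin{align*}
\lim_{n\to\infty}\frac{1}{n}\,D^{\LORD}(n)\;\ge\;\Bigl(L_0+\frac{L_0+1}{c-1}\Bigr)^{-1}
\end{align*}
almost surely, and the same bound for $\lim_{n\to\infty}\frac1n\E\bigl(D^{\LORD}(n)\bigr)$. This is still a strictly positive constant, so it suffices for everything the corollary is used for downstream (the linear discovery rate $D^{\LORD}(n)=\Theta(n)$ in Examples~1 and~2). You should therefore either prove and state this weaker bound, or strengthen hypothesis~(2) to require $G(\beta_\ell)>c/\ell$ for all $\ell\ge1$, in which case your $k^{-c}$ tail bound and the constant $\bigl(L_0+\frac{1}{(c-1)L_0^{c-1}}\bigr)^{-1}$ do go through for any integer $L_0\ge1$.
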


In words, upon finding sequence $\underline{\beta}$ which satisfies conditions of Corollary~\ref{cor:EDm}, \LORD gives linear discovery rate $$D^\LORD (n) = \Theta(n)\,.$$

\bigskip

{\bf Example 1.} Suppose that there exist constants $\lambda > 0$, $\kappa \in (0,1)$, and $x_0$ such that $F(x) \ge \lambda x^\kappa$, for all $x < x_0$. Define
$$C_* \equiv \sum_{\ell=1}^\infty \frac{\log \ell}{\ell^{1/\kappa}}<\infty\,.$$ 
Letting $\beta_\ell =  (\alpha \log \ell)/(C_* \ell^{1/\kappa})$, we have $\sum_{\ell=1}^\infty \beta_\ell = \alpha$.  Also, for any $c>1$ we can choose $L_0$ sufficiently large such that for $\ell >L_0$
\begin{align*}
G(\beta_\ell) &\ge (1-\eps) \beta_\ell + \eps \lambda \beta_\ell^\kappa\\
&\ge \eps \lambda \frac{(\alpha \log \ell)^\kappa}{C_*^\kappa \ell} > \frac{c}{\ell}\,.
\end{align*}
Therefore, both conditions in Corollary~\ref{cor:EDm} are satisfied and thus \LORD gives linear discovery rate.
\bigskip

{\bf Example 2 (Mixture of Gaussians).} Suppose that we are getting samples $z_i \sim \normal(\theta_i,1)$ and we want to test hypothesis $H_i :\, \theta_i = 0$ versus the alternative $\theta_i = \mu$. In this case, two-sided $p$-values are given by 
$$p_i = 2\Big(1-\Phi(|z_i|)\Big)\,.$$
Therefore,
\begin{eqnarray*}
F(\nu) &=& \prob_{\theta_i=\mu} (p_i\le \nu) \\
&=& \prob_{\theta_i=\mu}(\Phi^{-1}(1-\nu/2) \le |z_i|)\\
&=& 2- \Phi(\Phi^{-1}(1-\nu/2) + \mu) - \Phi(\Phi^{-1}(1-\nu/2)-\mu)\,.
\end{eqnarray*}
%
%Condition 2 in Corollary~\ref{cor:EDm} concerns the behavior of $G(\cdot)$ around zero. 
Let $\zeta \equiv \Phi^{-1}(1-\nu/2)$ and thus $\Phi(-\zeta) = \nu/2$. Using this notation, we write
\begin{eqnarray}
F(\nu) = 2- \Phi(\zeta+\mu) - \Phi(\zeta-\mu)
> {\Phi(\mu-\zeta)}\,. \label{eq:Mix1}
\end{eqnarray}
Recall the following classical bounds on the CDF of normal distribution for $t\ge0$: 
\begin{align}
\frac{\phi(t)}{t}\Big(1-\frac{1}{t^2}\Big)\le \Phi(-t) \le \frac{\phi(t)}{t}\,.\label{eq:normal_bound}
\end{align}
Applying the above inequalities in equation~\eqref{eq:Mix1}, we obtain
\begin{align}
F(\nu) &> \Phi(-\zeta) \frac{\Phi(\mu-\zeta)}{\Phi(-\zeta)} \nonumber\\
&\ge \frac{\nu}{2} \frac{\phi(\zeta-\mu)}{\phi(\zeta)} \frac{\zeta}{\zeta-\mu} \Big(1-\frac{1}{(\zeta-\mu)^2}\Big)\nonumber\\
&= \frac{\nu}{2}\, e^{\mu \zeta - \mu^2/2}\,\frac{\zeta}{\zeta-\mu} \Big(1-\frac{1}{(\zeta-\mu)^2}\Big)\nonumber\,.
%&\ge \frac{\nu}{2} \exp\Big(\mu \sqrt{-2\log(2\nu)}\Big) + o(1)\,.\label{eq:Mix2}
\end{align}
Using the LHS bound in~\eqref{eq:normal_bound}, it is easy to see that for small enough $\nu$, $\zeta> \sqrt{\log (2/\nu)}$. Therefore, for small enough $\nu$, we obtain
\begin{eqnarray}
F(\nu) > \frac{\nu}{4} \exp(-\mu^2/2) \exp(\mu\sqrt{\log(2/\nu)})\,.\label{eq:Mix2}
\end{eqnarray}
Fix $a>1$ and let 
$$C_* = \sum_{\ell=1}^\infty \frac{1}{\ell \log^a(\ell+1)} < \infty\,.$$
Choosing 
$$\beta_\ell =  \frac{\alpha}{C_* \ell \log^a (\ell+1)}\,,$$
clearly $\sum_{\ell=1}^\infty \beta_\ell = \alpha$. Furthermore, invoking equation~\eqref{eq:Mix2} it is easy to see that for any $c>1$ there exists $L_0$ sufficiently large,
such that the following holds true for all $\ell >L_0$:
\begin{eqnarray*}
G(\beta_\ell) \ge \eps F(\beta_\ell) \ge \frac{c}{\ell}\,.
\end{eqnarray*}
Therefore, both conditions in Corollary~\ref{cor:EDm} are satisfied and the expected number of discoveries would be $\Theta(n)$.

%%%%%%%%%%%%%%%%%%%%%%%%%%%%%%%%
\section{Numerical experiments}\label{sec:numerical}
In this section, we compare the performance of \LOND and \LORD algorithm with BH, Bonferroni and alpha-investing procedures in terms of $\FDR$,
$\mFDR$ and statistical power.  Our evaluations are on both synthetic and real data sets.
%We further compute the total number of discoveries made by \LORD and \LOND to corroborate our results in Section~\ref{sec:discovery-rate}.
%, and compare them with an alpha-investing rule. 
\subsection{Synthetic data}
\subsubsection{Independent $p$-values}\label{sec:numerical-indep}
We consider similar setup as in~\cite{alpha-investing}. A set of hypotheses $\cH(n) = (H_1,\cdots, H_n)$ are tested where each hypothesis concerns mean of a normal distribution, $H_j : \,\theta_j = 0$.
Parameters $\theta_j$ are set according to a mixture model:
\begin{eqnarray}\label{eq:mu}
\theta_j \sim \begin{cases}
0 & w.p. \quad 1-\pi\,,\\
\normal(0,\sigma^2) & w.p. \quad \pi\,.
\end{cases}
\end{eqnarray}

\bigskip

 In the first experiment, we set $n=1000$ and $\sigma^2 = 2\log n$. The test statistics are independent normal random variables $Z_j \sim \normal(\theta_j,1)$, for which the two-sided $p$-values work out at $p_j = 2\Phi(-|Z_j|)$.\footnote{Value of $\sigma$ controls the strength of the signal to be distinguished from noise. We set $\sigma^2 = 2\log n$ because for truly null hypotheses ($\theta_i = 0$) we have $Z_i\sim\normal(0,1)$, and maximum of $m$ normal variables  is w.h.p. at most $\sqrt{2\log(n)}$. Clearly, larger $\sigma$ leads to larger power and better control of FDR and mFDR.} Given these $p$-values, we compute $\FDR$ and $\mFDR_1$ for significance level $\alpha = 0.05$ by averaging over $10,000$ trials of the sequence of test statistics. 
For \LOND and \LORD we choose the sequence $\underline{\beta} = (\beta_\ell)_{\ell=1}^\infty$ as follows:
\begin{eqnarray}
%\beta_\ell = \begin{cases}
%C_*&\text{ if }\ell=1\,,\\
%\frac{C_*}{\ell \log^2(\ell)} &\text{ otherwise}\,,
%\end{cases}
\beta_\ell = \frac{C_*}{\ell \log^2(\ell \vee 2)}\,,
\end{eqnarray}
with $C_*$ set in a way to ensure $\sum_{\ell=1}^\infty \beta_\ell = \alpha$.

The Bonferroni procedure used in our experiments is the one that applies the following decision rule:
\begin{eqnarray*}
T_\ell = \begin{cases}
1 &\text{if }p_\ell \le \beta_\ell \quad \text{(reject the null hypothesis $H_\ell$)}\,,\\
0 &\text{otherwise} \quad \text{(accept the null hypothesis $H_\ell$)}.
\end{cases}
\end{eqnarray*} 

We consider an alpha-investing rule, as explained in Section~\ref{sec:comparison-alpha}, with the following rule in equation~\eqref{eq:alpha-investing}:
$$\alpha_j  = \frac{W(j)}{1+j-\tau_j}\,,$$
where $\tau_j$ denotes the time of the most recent discovery by time $j$.  This rule was proposed by~\cite{alpha-investing} to show how context dependent information can be incorporated into building alpha-investing rules. In case there is substantial side information that the first few hypotheses are likely to be rejected and that the truly non-null hypotheses appear in clusters, this rule exploits such information to increase power. The same rule has been used by~\cite{lin2011vif} in VIF regression algorithm designed for online feature selections.   

We evaluate performance of the five algorithms under the following two scenarios:
\begin{itemize}
\item[$\bullet$] \emph{Scenario I: Absence of domain knowledge.} In this scenario, nonzero means $\mu_i$ appear randomly in the stream of
hypotheses as explained by equation~\eqref{eq:mu}.

\begin{figure}[!t]
    \centering
    \subfigure[FDR]{
        \includegraphics[width = 2.9in]{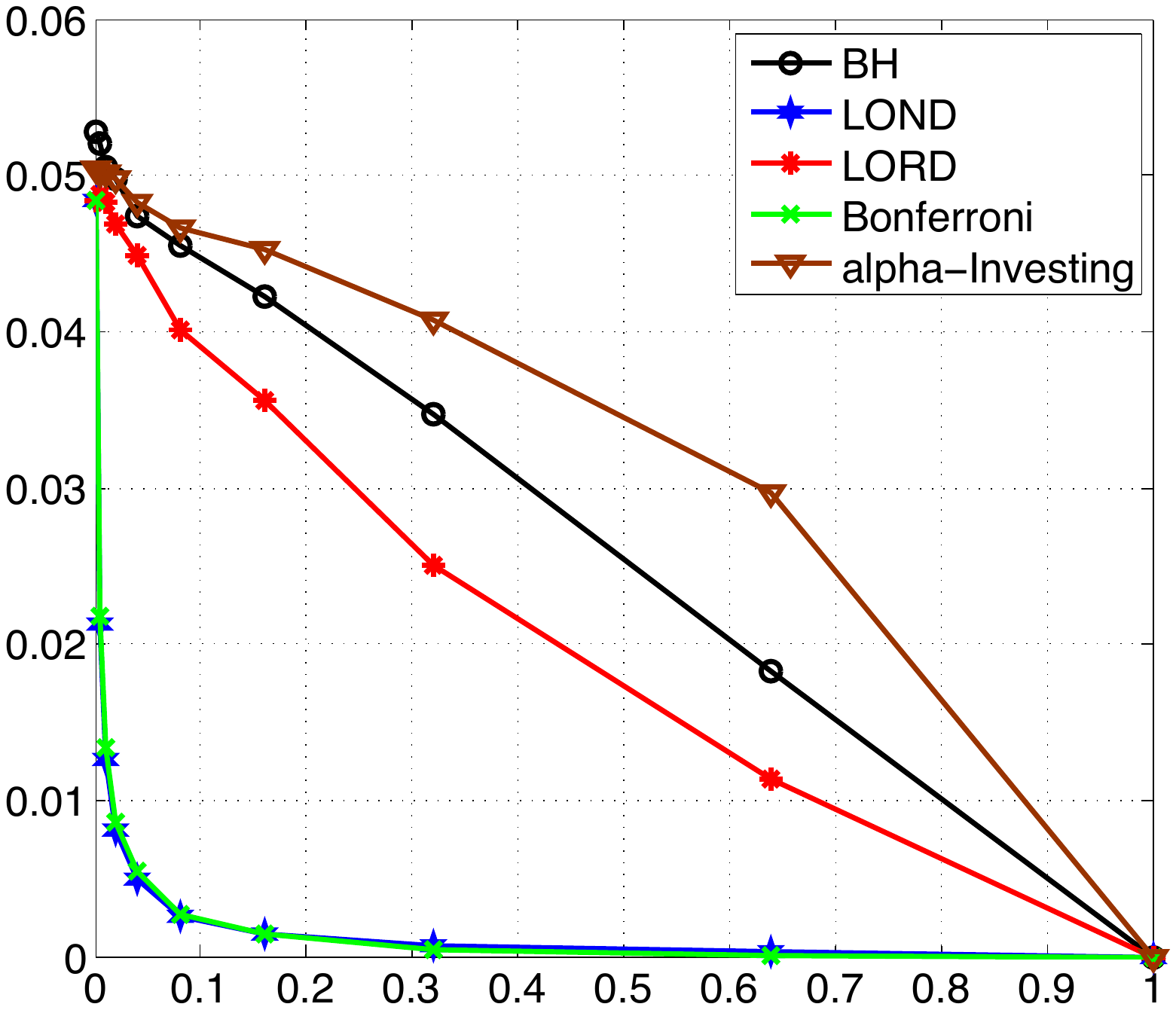}
        \put(-105,-10){$\pi$}
        \put(-105,-20){\phantom{a}}
        \put(-230,85){\rotatebox{90}{$\FDR$}}
        \label{fig:FDR_Worst}
        }
        \hspace{0.8cm}
    \subfigure[mFDR]{
        \includegraphics[width=2.9in]{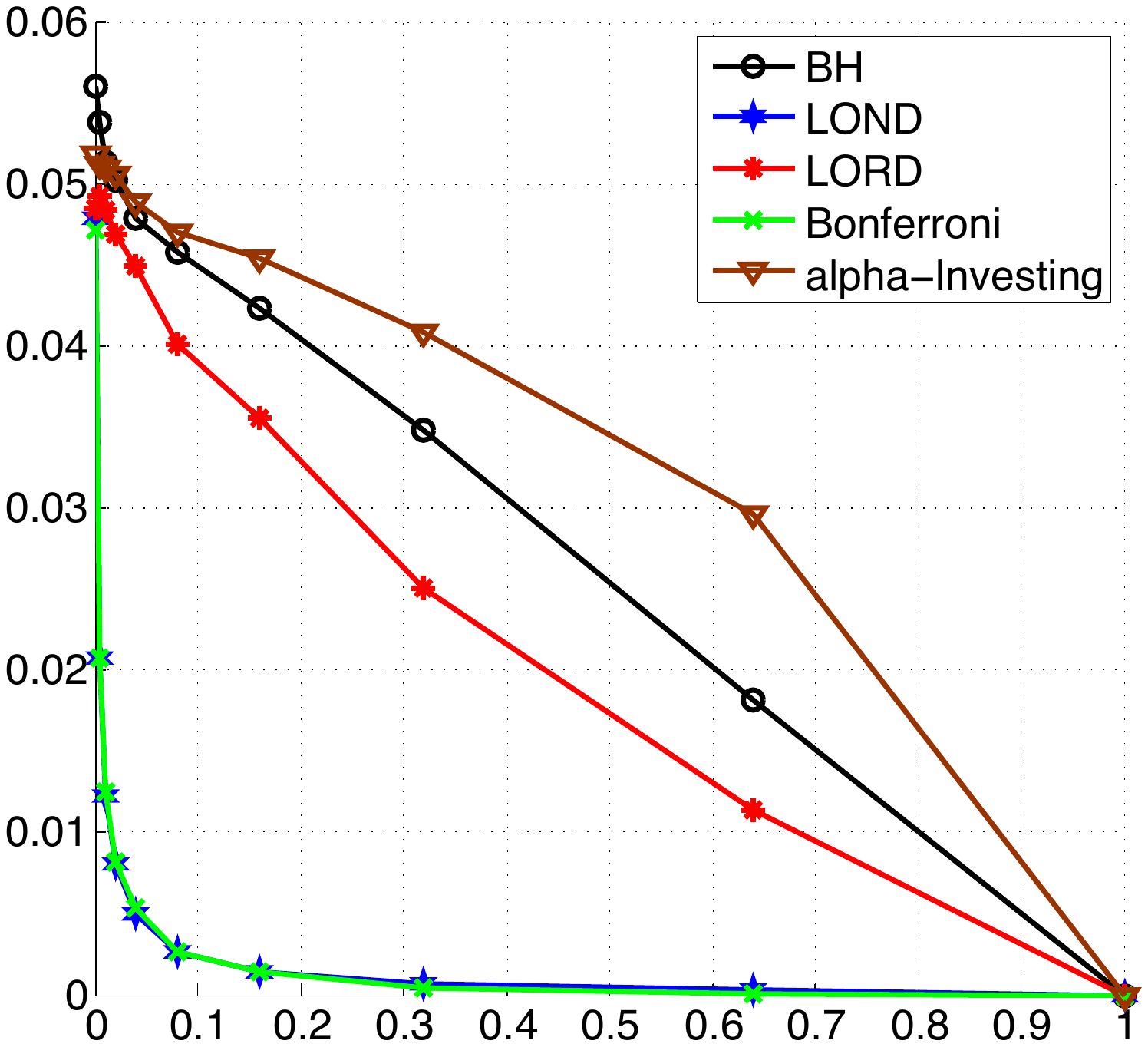}
        \put(-105,-10){$\pi$}
        \put(-105,-20){\phantom{a}}
        \put(-225,80){\rotatebox{90}{$\mFDR$}}
        \label{fig:mFDR_Worst}
        }
     \subfigure[Relative power to BH]{
        \includegraphics[width=2.9in]{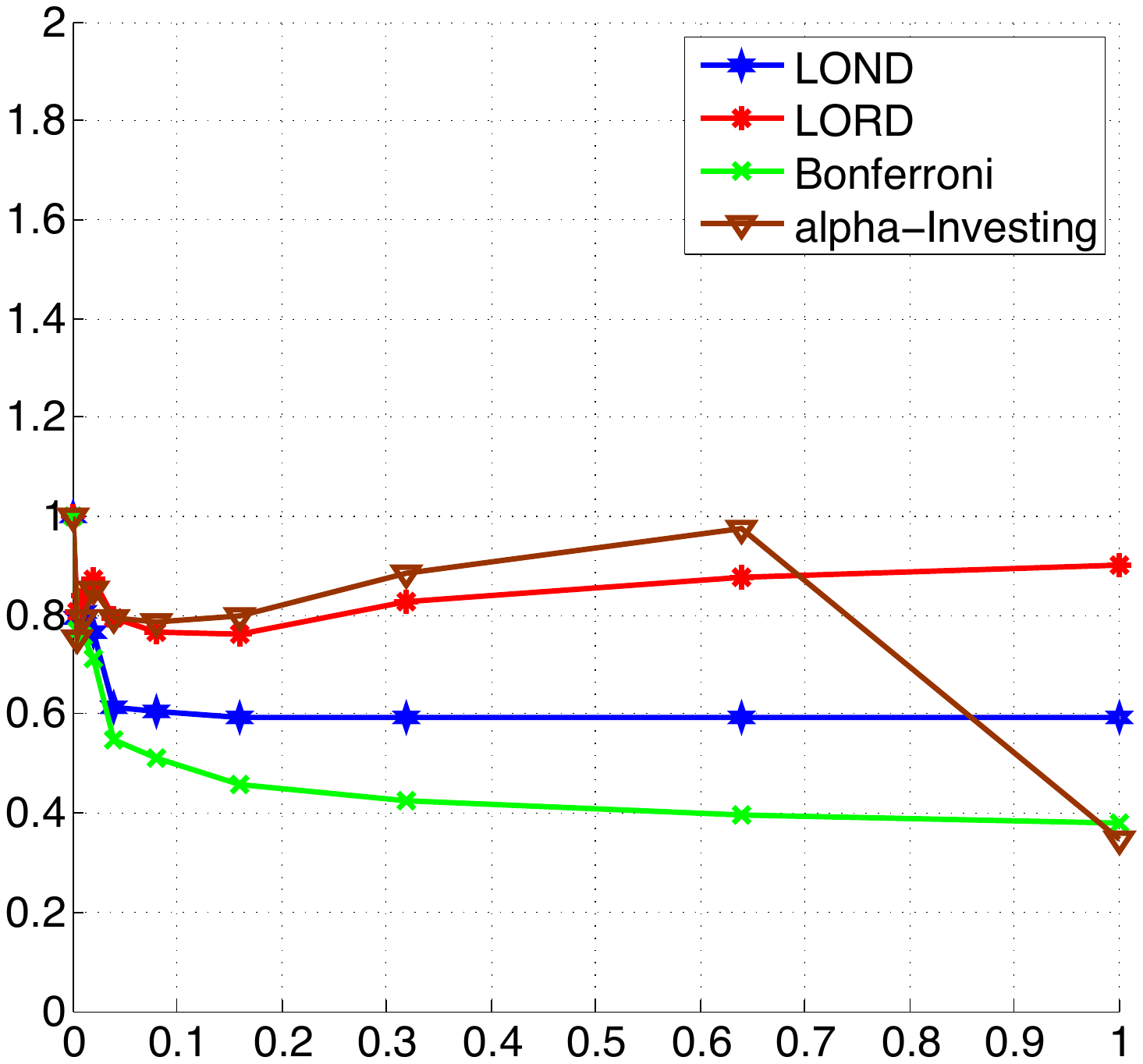}
        \put(-105,-10){$\pi$}
        \put(-105,-20){\phantom{a}}
        \put(-225,50){\rotatebox{90}{Relative power to BH}}
        \label{fig:Power_Worst}
        }
    \caption{FDR and mFDR of \LORD, \LOND, BH, Bonferroni and an alpha-investing rule under Scenario I. Figure (c) shows the relative power of the procedures to BH method.}\label{fig:Exp1_Worst}
    \vspace{-.7cm}
\end{figure}

\item[$\bullet$] \emph{Scenario II: Presence of domain knowledge.} We assume that an investigator has knowledge about the underlying domain of research and she uses this information in choosing the hypotheses. She first tests the (null) hypotheses that are most likely to be rejected. To simulate this scenario, we sort the hypotheses in the stream according to the absolute values of means, i.e., $|\theta_i|$, in decreasing order. Those with larger $|\theta_i|$ appear earlier. Let us stress that the ordering is based on $|\theta_j|$ and not the $p$-values.
\end{itemize}
\begin{figure}[!t]
    \centering
    \subfigure[FDR]{
        \includegraphics[width = 2.9in]{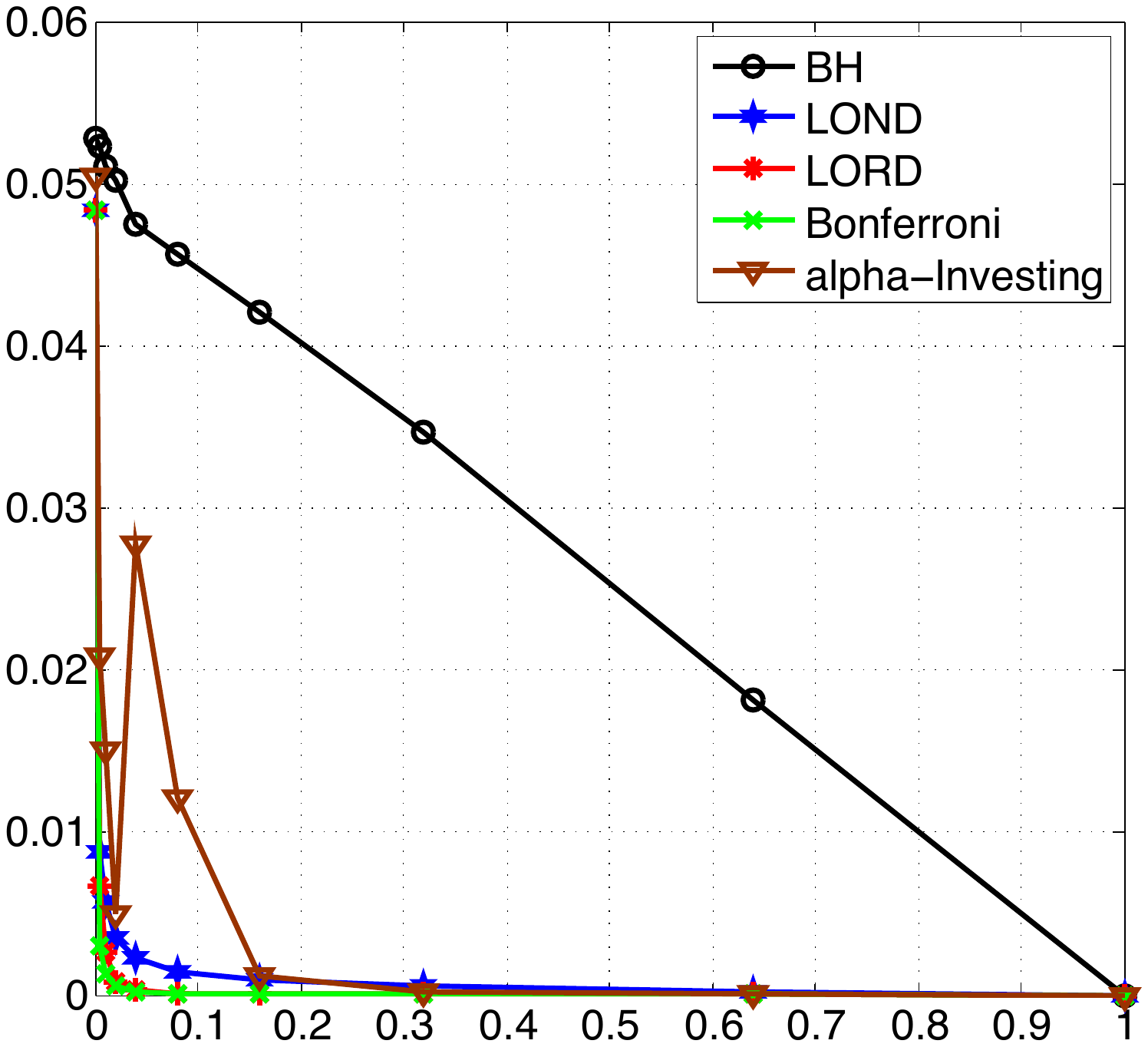}
        \put(-105,-10){$\pi$}
        \put(-105,-20){\phantom{a}}
        \put(-225,80){\rotatebox{90}{$\mFDR$}}
        \label{fig:FDR_Best}
        }
         \hspace{0.8cm}
    \subfigure[mFDR]{
        \includegraphics[width=2.9in]{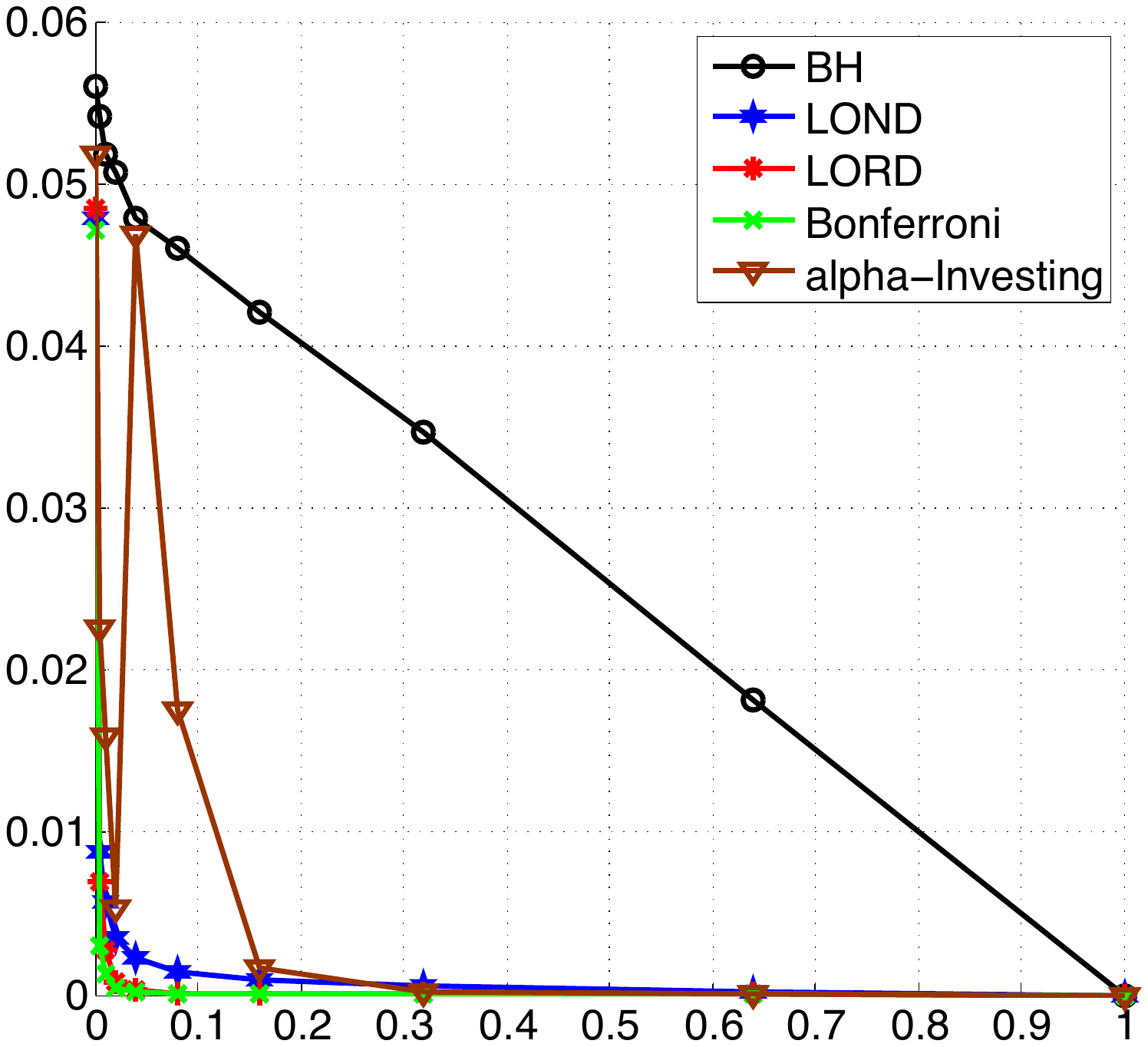}
        \put(-105,-10){$\pi$}
        \put(-105,-20){\phantom{a}}
        \put(-225,80){\rotatebox{90}{$\mFDR$}}
        \label{fig:mFDR_Best}
        }
     \subfigure[Relative power to BH]{
        \includegraphics[width=2.9in]{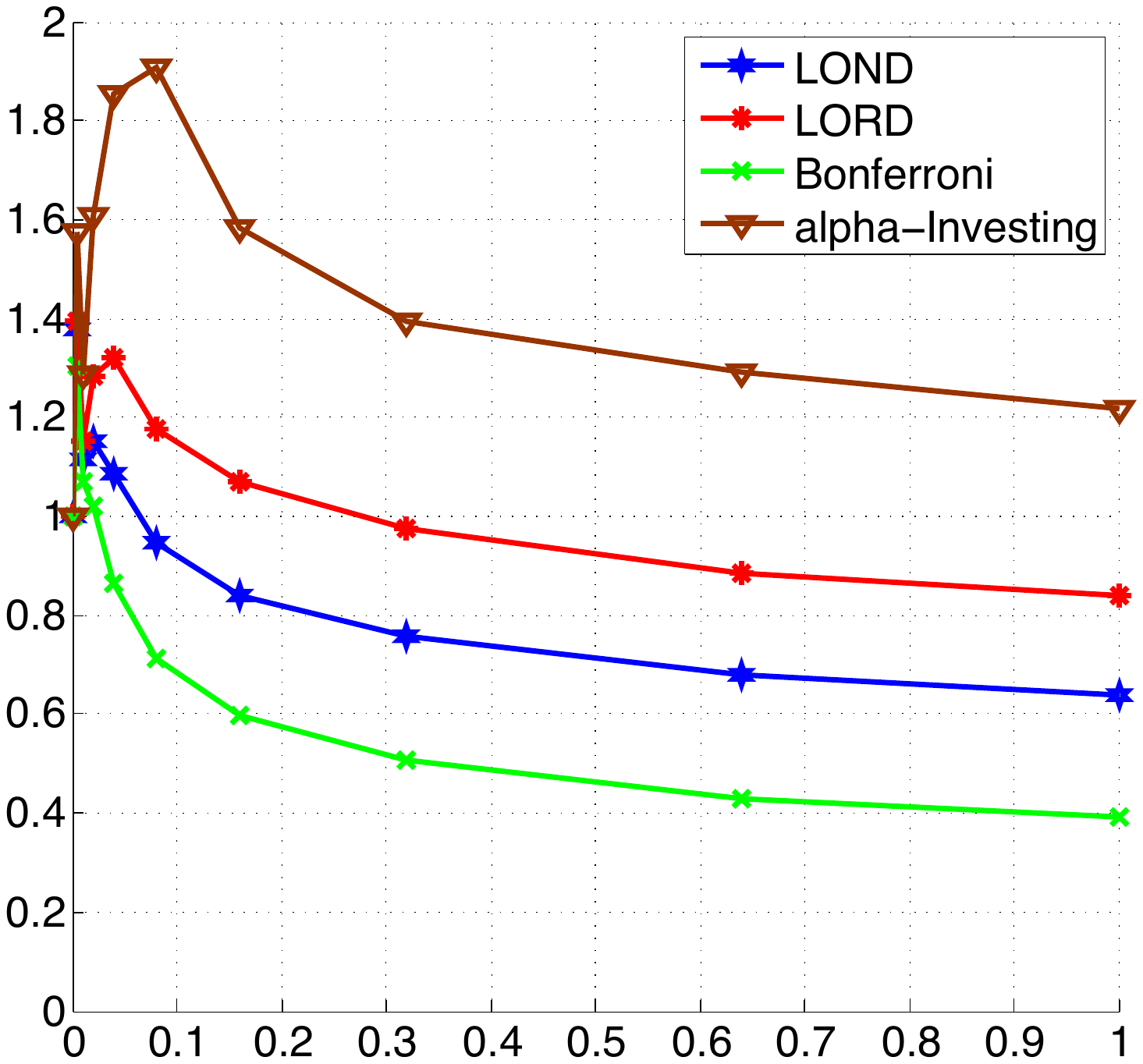}
        \put(-105,-10){$\pi$}
        \put(-105,-20){\phantom{a}}
        \put(-225,50){\rotatebox{90}{Relative power to BH}}
        \label{fig:Power_Best}
        }
    \caption{FDR and mFDR of \LORD, \LOND, BH, Bonferroni and an alpha-investing rule under Scenario II. Figure (c) shows the relative power of the procedures to BH method.}\label{fig:Exp1_Best}
    \vspace{-.7cm}
\end{figure}

\begin{figure}[!t]
    \centering
    \subfigure[\LOND]{
        \includegraphics[width = 2.9in]{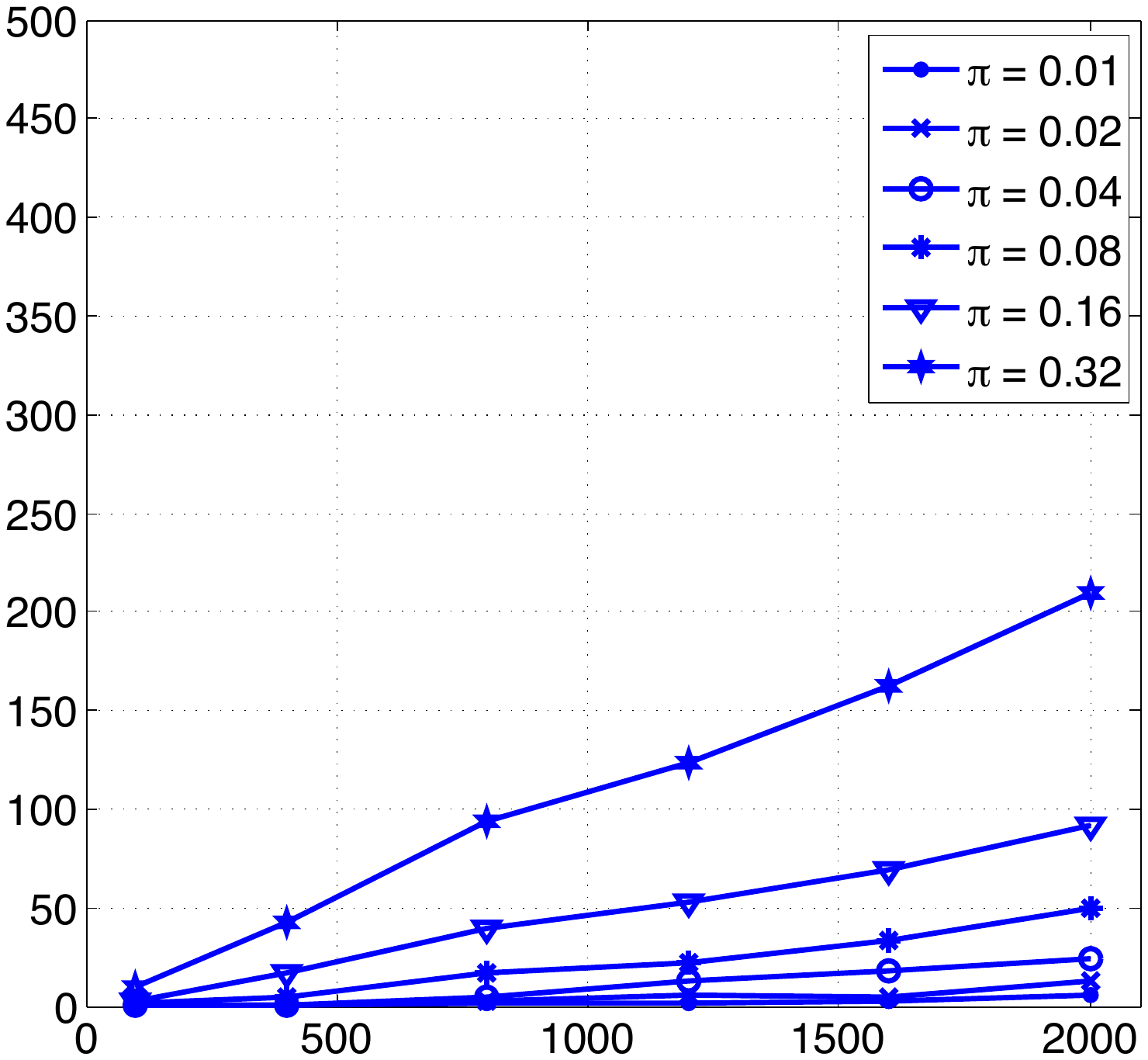}
        \put(-105,-10){$n$}
        \put(-105,-20){\phantom{a}}
        \put(-225,35){\rotatebox{90}{Total number of discoveries}}
        \label{fig:LOND_Worst}
        }
         \hspace{0.8cm}
    \subfigure[\LORD]{
        \includegraphics[width=2.9in]{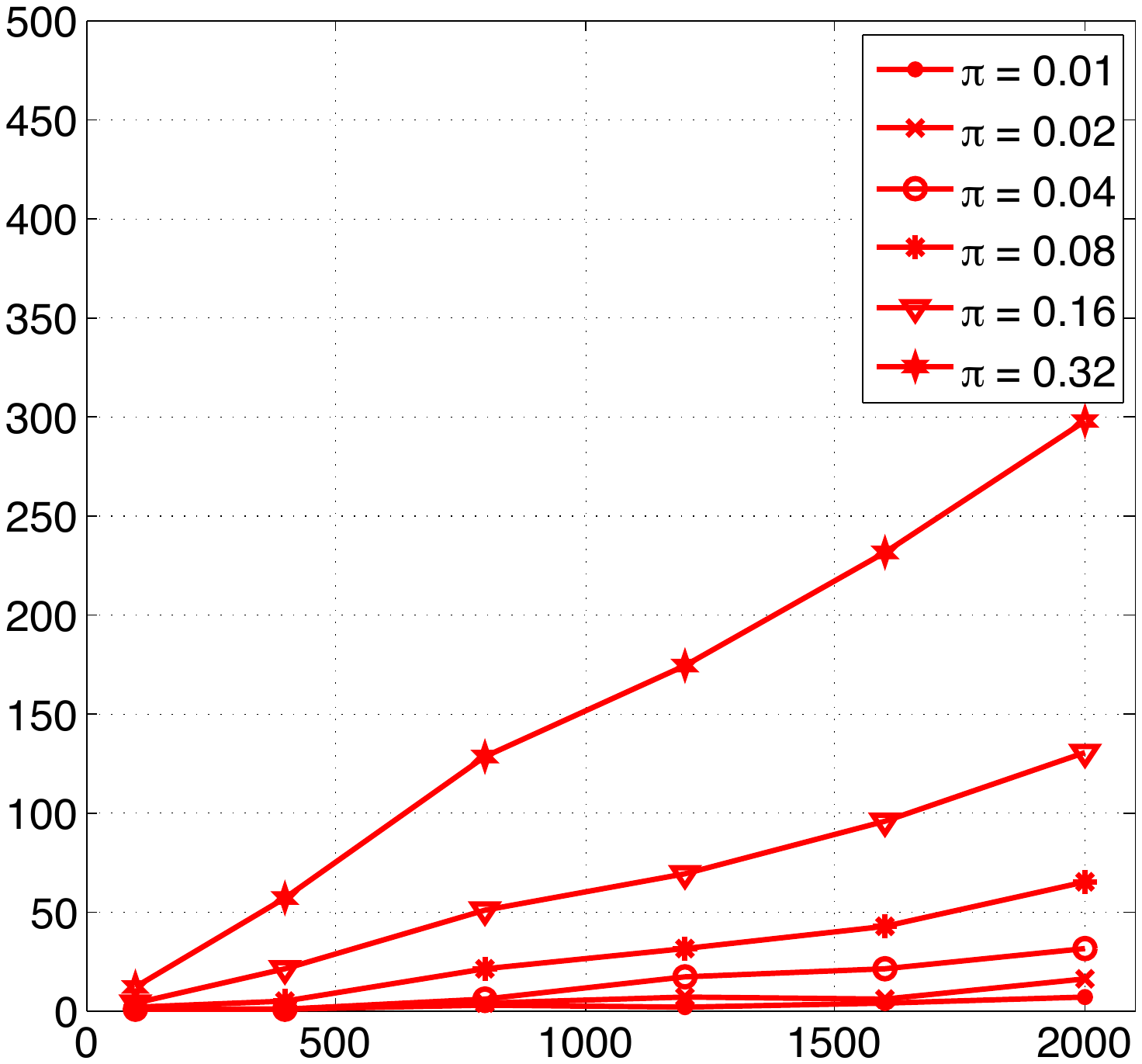}
        \put(-105,-10){$n$}
        \put(-105,-20){\phantom{a}}
        \put(-225,35){\rotatebox{90}{Total number of discoveries}}
        \label{fig:LORD_Worst}
        }
    \caption{Total number of discoveries under setting described in Section~\ref{sec:numerical-indep} (Scenario I). }\label{fig:totD_Worst}
    \vspace{-.7cm}
\end{figure}

\begin{figure}[!t]
    \centering
    \subfigure[\LOND]{
        \includegraphics[width = 2.9in]{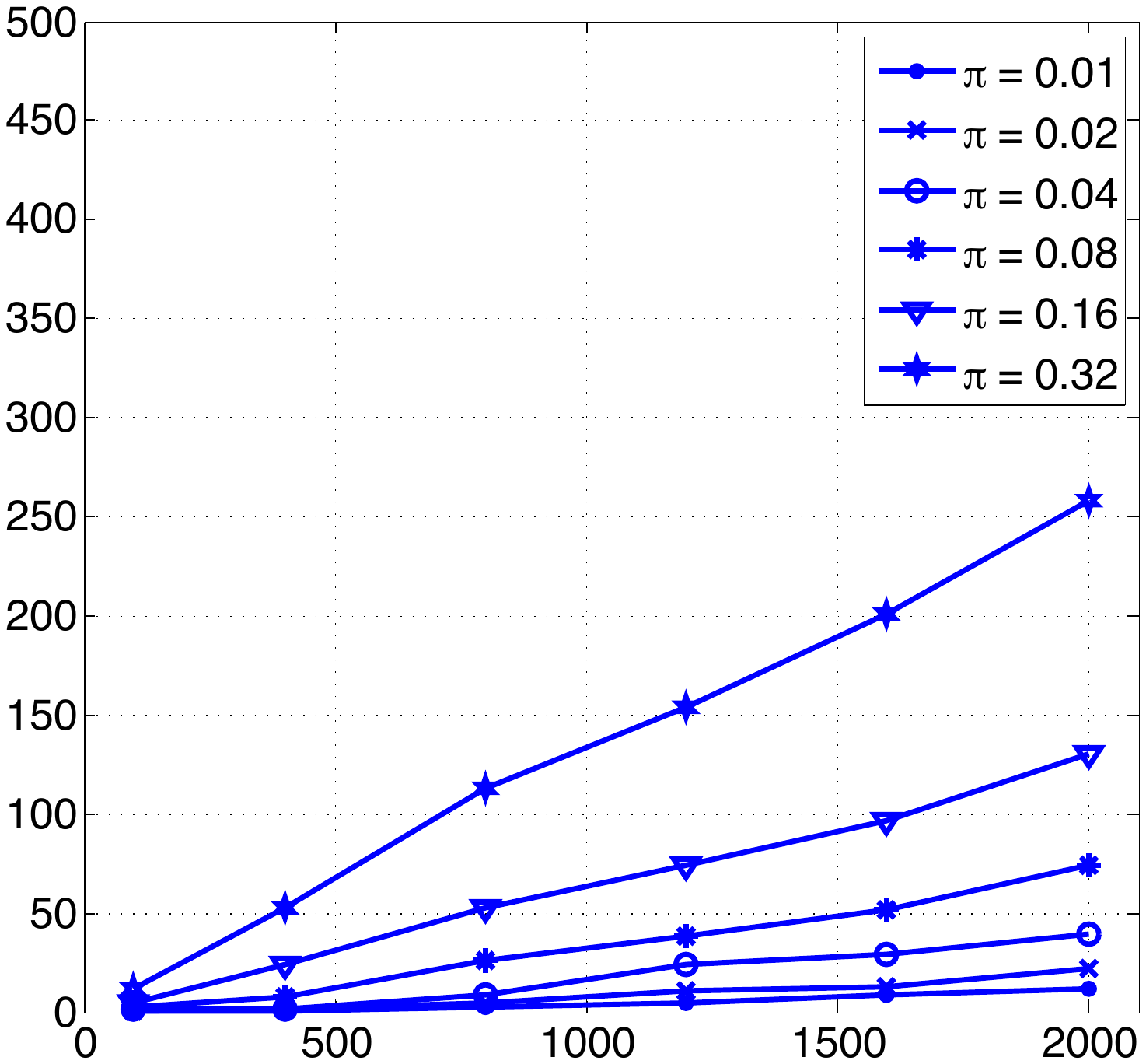}
        \put(-105,-10){$n$}
        \put(-105,-20){\phantom{a}}
        \put(-225,35){\rotatebox{90}{Total number of discoveries}}
        \label{fig:LOND_Best}
        }
         \hspace{0.8cm}
    \subfigure[\LORD]{
        \includegraphics[width=2.9in]{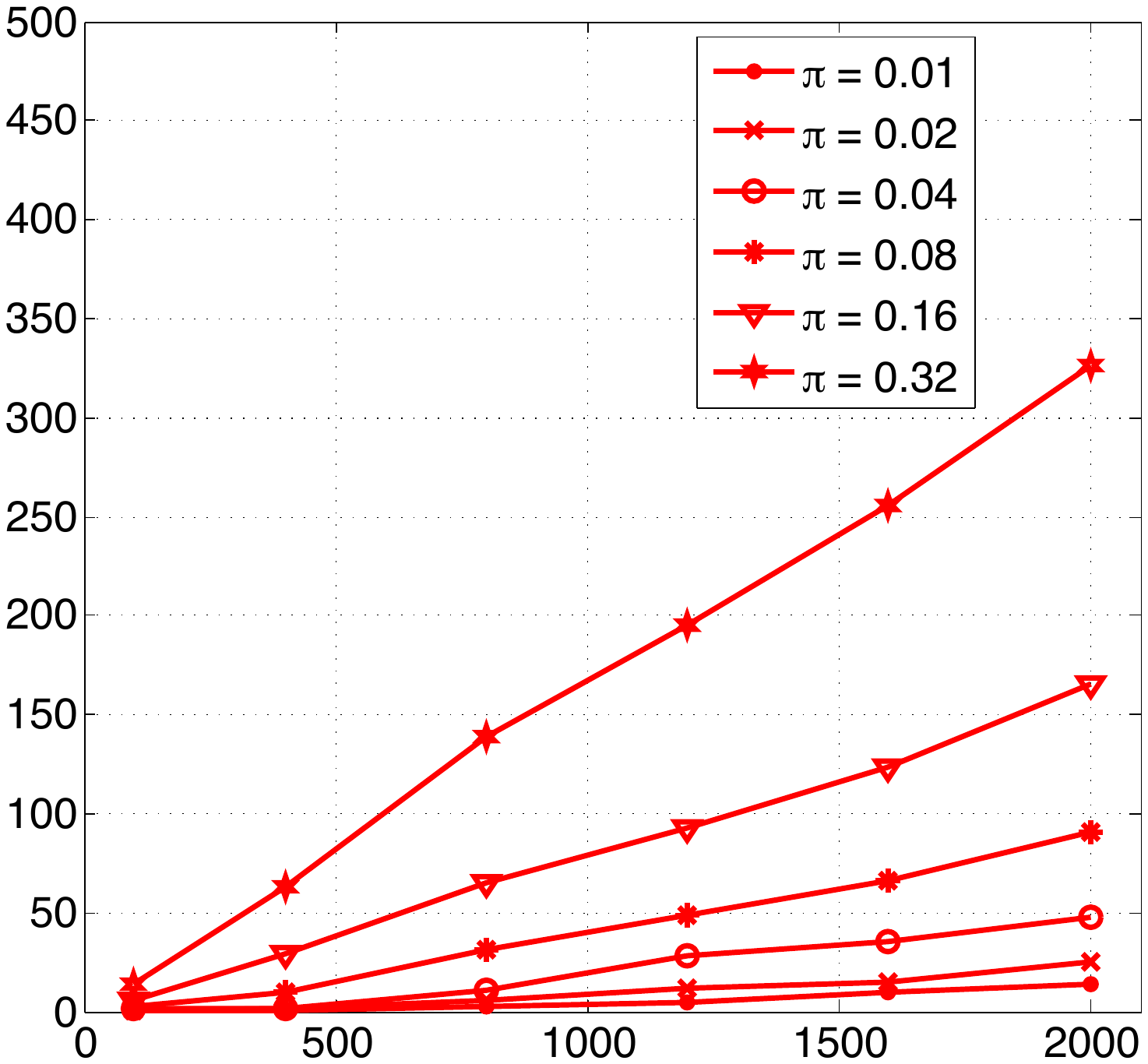}
        \put(-105,-10){$n$}
        \put(-105,-20){\phantom{a}}
        \put(-225,35){\rotatebox{90}{Total number of discoveries}}
        \label{fig:LORD_Best}
        }
    \caption{Total number of discoveries under setting described in Section~\ref{sec:numerical-indep} (Scenario II).}\label{fig:totD_Best}
    \vspace{-.7cm}
\end{figure}

Figures~\ref{fig:FDR_Worst} and~\ref{fig:mFDR_Worst} show FDR and mFDR achieved by the five procedures and several values of $\pi$, the proportion of truly non-null hypotheses, under Scenario I. As we see, FDR and mFDR of \LOND and Bonferroni are almost identical under this setting.
In Figure~\ref{fig:Power_Worst}, we show the relative power of the procedures with respect to BH method, under Scenario I. More precisely, letting $U_\theta(n) = D(n) - V^\theta(n)$ be the number of correctly rejected hypotheses by the procedure of interest, we estimate 
$$\E\left(\frac{U_\theta(n)}{U^\BH_\theta(n)}\right)$$
via averaging over $10,000$ trials of test statistics. Note that for large values of $\pi$, the relative power of alpha-investing drops substantially. The reason is that in this case, the rule rejects most of the hypotheses at the beginning and its budget, and ergo the significance level $\alpha_\ell$, increases rapidly. When $\alpha_\ell$ gets close to one, any acceptance of a null hypothesis yields a large decrease in the budget and makes it negative. Therefore, the algorithm halts henceforth missing the next discoveries.

Figure~\ref{fig:Exp1_Best} exhibits the same metrics for the five procedures under Scenario II. In presence of domain knowledge, we see faster drop-off in $\FDR$ and $\mFDR$ of \LORD and \LOND procedures. Further, they achieve higher relative power to BH compared to Scenario I, especially for small values of $\pi$. This supports our discussion in Section~\ref{sec:domain-knowledge}.

\bigskip

In the second experiment, we compute the expected number of discoveries made by \LORD and \LOND under Scenarios I and II, for several values of $n$ and $\pi$.  The results are depicted in Figures~\ref{fig:totD_Worst} and~\ref{fig:totD_Best}. As we see, for any fix $\pi$, \LORD and \LOND exhibit a linear discovery rate in both scenarios. This corroborates our findings in Section~\ref{sec:discovery-rate} regarding the total discovery rate of these procedures.

\begin{figure}[!t]
    \centering
    \subfigure[FDR]{
        \includegraphics[width = 2.9in]{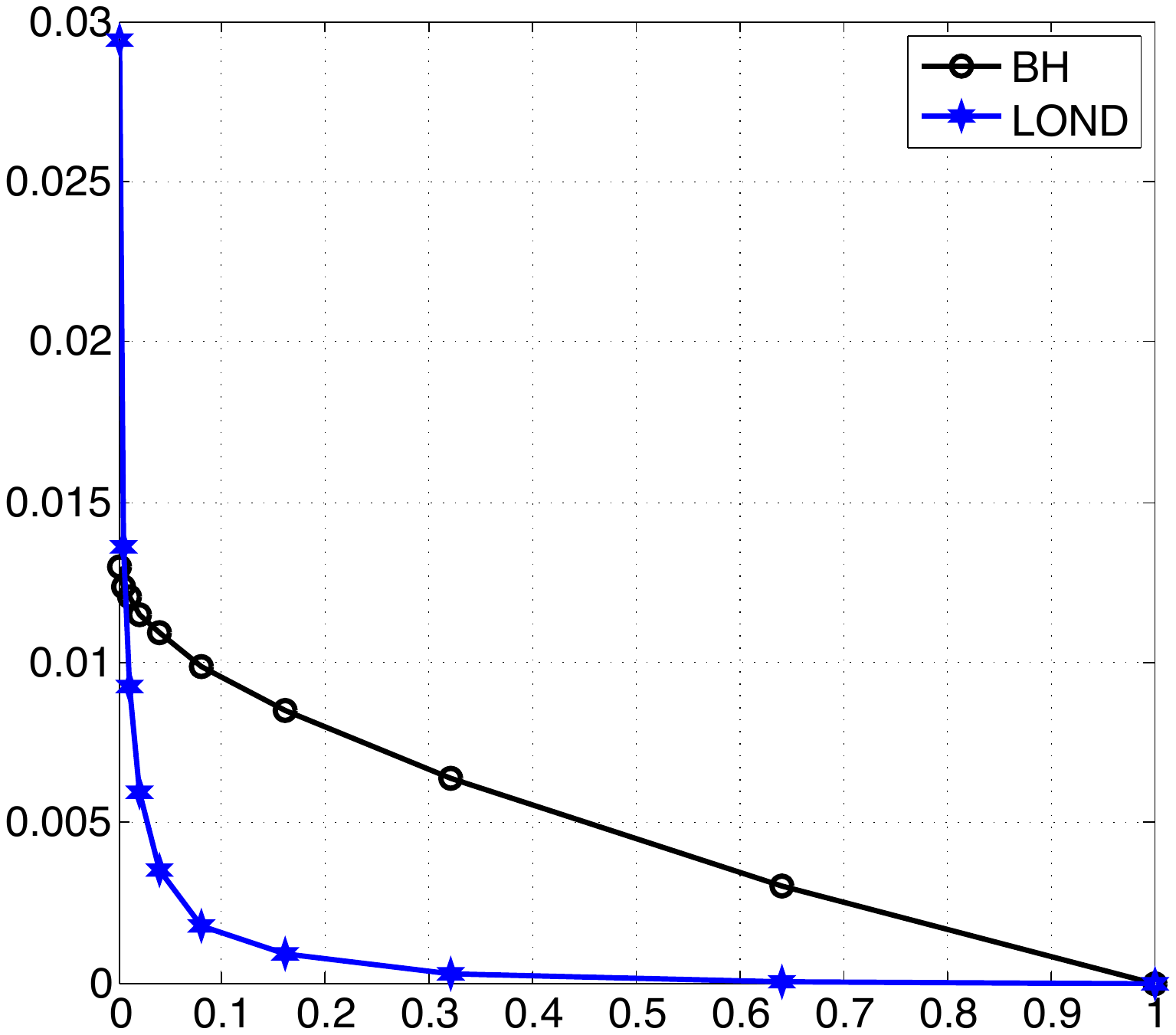}
        \put(-105,-10){$\pi$}
        \put(-105,-20){\phantom{a}}
        \put(-225,80){\rotatebox{90}{$\FDR$}}
        \label{fig:FDR_Worst_dep2}
        }
        \hspace{.8cm}
    \subfigure[mFDR]{
        \includegraphics[width=2.9in]{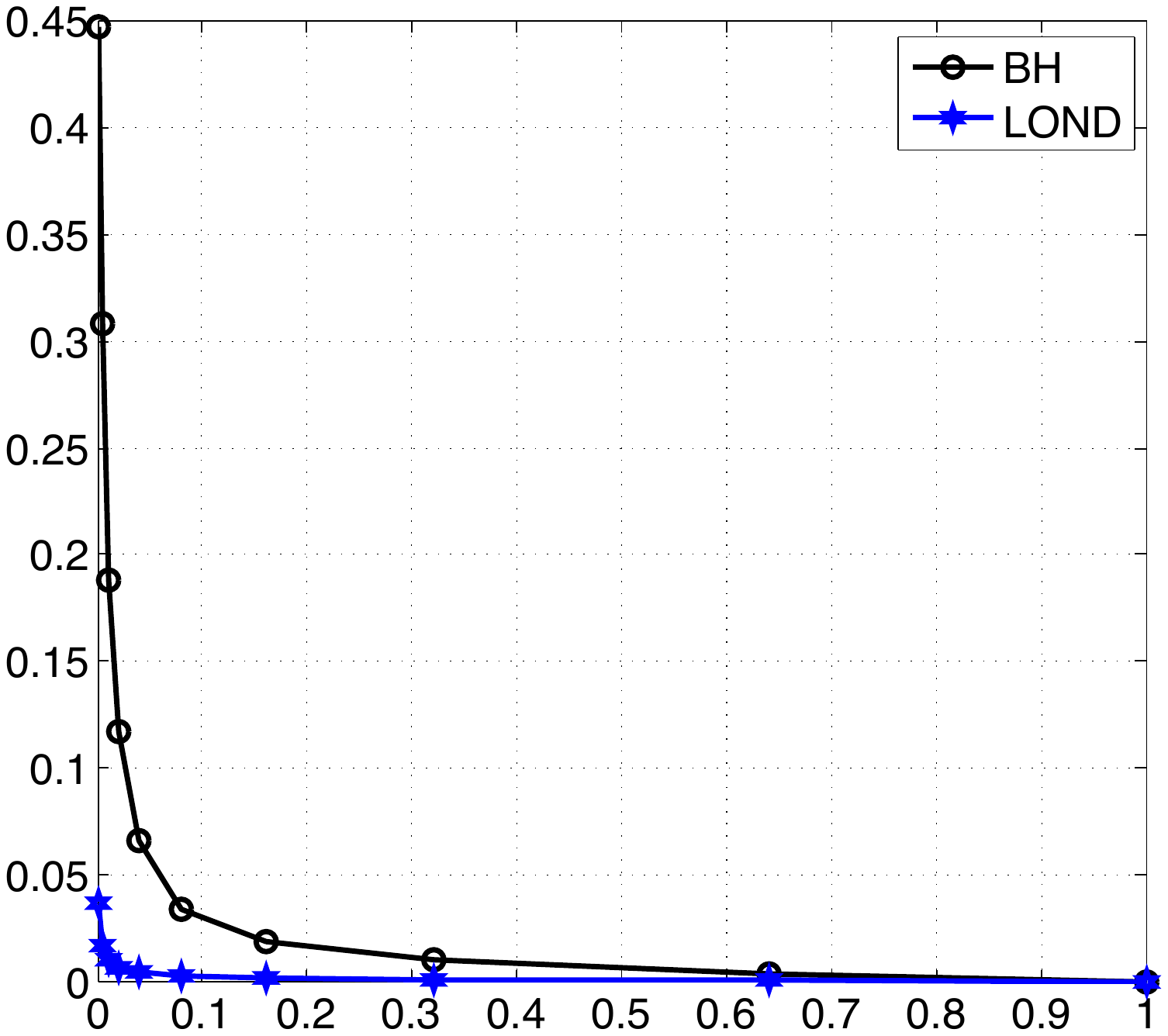}
        \put(-105,-10){$\pi$}
        \put(-105,-20){\phantom{a}}
        \put(-225,80){\rotatebox{90}{$\mFDR$}}
        \label{fig:mFDR_Worst_dep2}
        }
     \subfigure[Relative power to BH]{
        \includegraphics[width=2.9in]{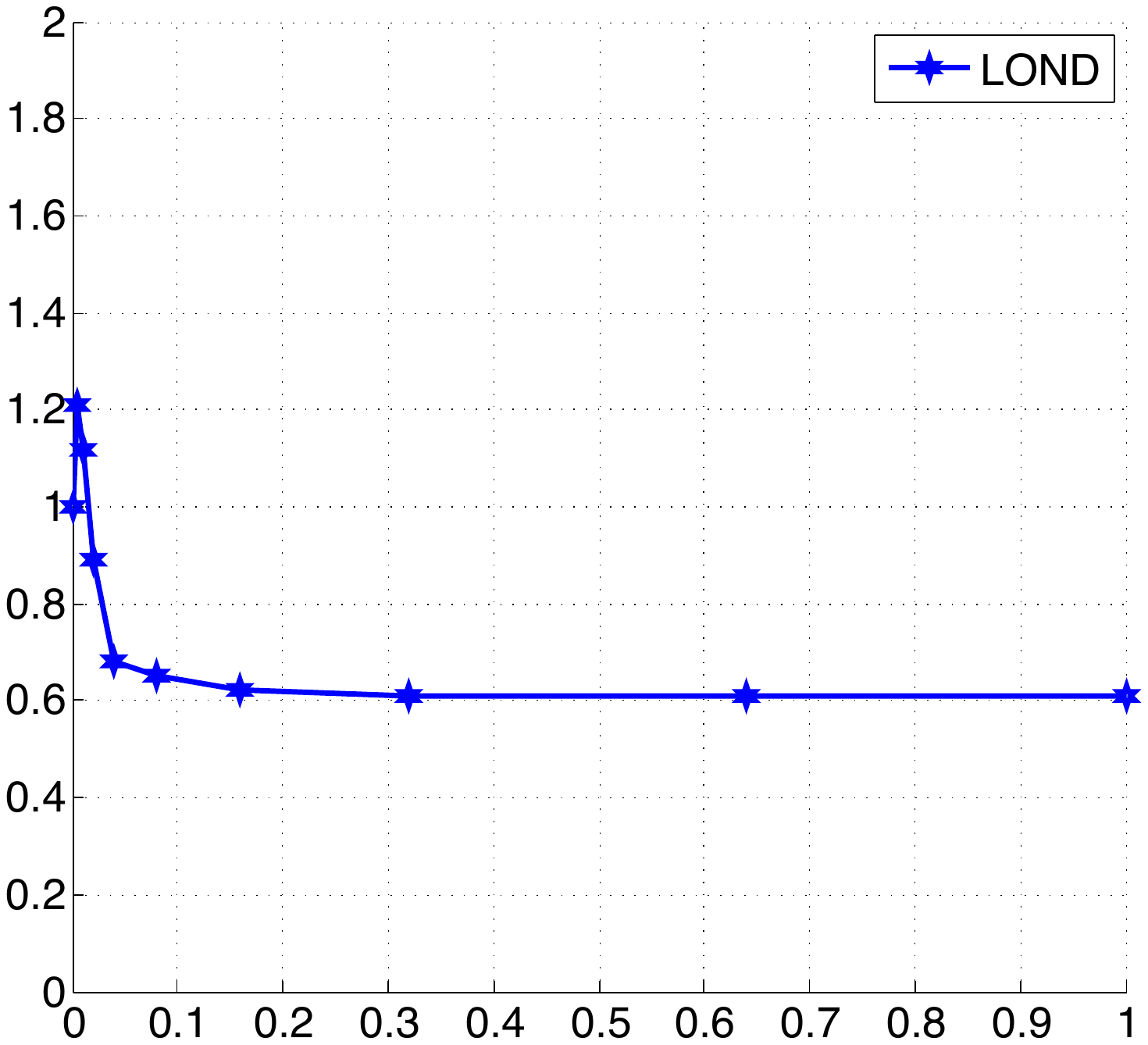}
        \put(-105,-10){$\pi$}
        \put(-105,-20){\phantom{a}}
        \put(-225,50){\rotatebox{90}{Relative power to BH}}
        \label{fig:Power_Worst_dep2}
        }
    \caption{FDR and mFDR of (adjusted) \LOND and BH under setting described in Section~\ref{sec:numerical-dep} (Scenario I). Figure (c) shows the relative power of \LOND to BH method.}\label{fig:Exp4_Worst}
    \vspace{-.7cm}
\end{figure}

\begin{figure}[!t]
    \centering
    \subfigure[FDR]{
        \includegraphics[width = 2.9in]{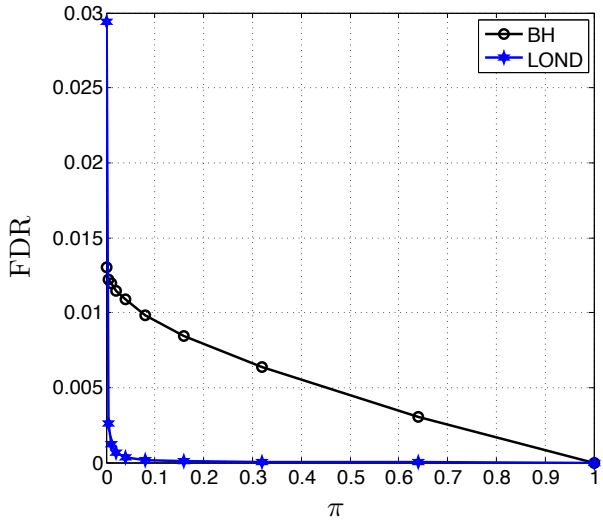}
        \put(-105,-10){$\pi$}
        \put(-105,-20){\phantom{a}}
        \put(-225,80){\rotatebox{90}{$\FDR$}}
        \label{fig:FDR_Best_dep2}
        }
        \hspace{0.8cm}
    \subfigure[mFDR]{
        \includegraphics[width=2.9in]{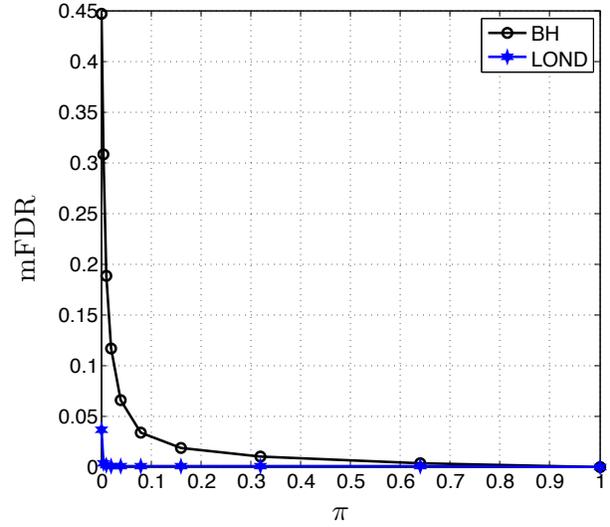}
        \put(-105,-10){$\pi$}
        \put(-105,-20){\phantom{a}}
        \put(-225,80){\rotatebox{90}{$\mFDR$}}
        \label{fig:mFDR_Best_dep2}
        }
     \subfigure[Relative power to BH]{
        \includegraphics[width=2.9in]{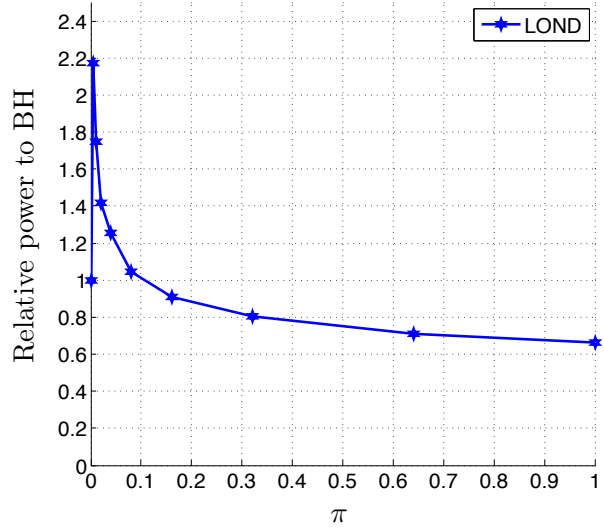}
        \put(-105,-10){$\pi$}
        \put(-105,-20){\phantom{a}}
        \put(-225,50){\rotatebox{90}{Relative power to BH}}
        \label{fig:Power_Best_dep2}
        }
    \caption{FDR and mFDR of (adjusted) \LOND and BH under setting described in Section~\ref{sec:numerical-dep} (Scenario II). Figure (c) shows the relative power of \LOND to BH method.}\label{fig:Exp4_Best}
    \vspace{-.7cm}
\end{figure}
%
%===============================================
\subsubsection{Dependent $p$-values}\label{sec:numerical-dep}
We evaluate the performance of $\LOND$ algorithm in case the test statistics, and thus the $p$-values, are dependent.
Here, we apply the adjustment to \LOND, as described in Section~\ref{sec:dependence}, to address dependency. 
We follows the same setting as in Section~\ref{sec:numerical-indep}, except that for each trial, the test statistics $Z = (Z_1,\dotsc, Z_n)$ are generated according to $Z\sim \normal(\theta, \Sigma)$, where $\theta = (\theta_1,\dotsc, \theta_n)$ and $\Sigma\in \reals^{n\times n}$ is constructed as follows.
Define
\begin{eqnarray*}
\tilde{\Sigma}_{ij} = \begin{cases}
1\, & \text{ if }i=j\,,\\
0.5\,&\text{otherwise.}
\end{cases}
\end{eqnarray*}
We let
\[\Sigma = \Lambda \hSigma \Lambda\,,\]
where $\Lambda$ is a diagonal matrix with uniformly random signs on the diagonal.
Figure~\ref{fig:Exp4_Worst} shows the performance of $\LOND$ and (adjusted) BH\footnote{Since the off-diagonal entries of $\Sigma$ can be negative, it does not satisfy PRDS~\cite{benjamini2001control} and in general we need the adjustment described in Section~\ref{sec:dependence} to BH to cope with dependency.} in controlling $\FDR$ and $\mFDR$ under Scenario I. 

Interestingly, in this setting the difference between criterion $\FDR$ and $\mFDR$ is more pronounced. As we see while the adjusted BH controls \FDR to be below $\alpha=0.05$, the corresponding $\mFDR$ can be as high as $0.45$. The (adjusted) \LOND though controls both \FDR and \mFDR to be less than $\alpha = 0.05$. 

Figure~\ref{fig:Exp4_Best} shows the performance of $\LOND$ and BH under Scenario II. 

To the best of our knowledge, there is no alpha-investing rule that controls $\mFDR$ in presence of dependency among the $p$-values,
and hence we do not include its evaluation in this experiment.
%===============================================
\subsection{Microarray example}
We illustrate performance of \LOND algorithm on a microarray example wherein the genes
to be tested arrive in the stream in an online fashion and we would like to find the significant ones while controlling $\FDR$ over infinite time horizon. 

We use a prostate cancer data set~\cite{singh2002gene}, which  
contains genetic expression levels for $n = 12600$ genes on $ m= 102$ men, $m_1 = 50$ from normal controls and $m_2 = 52$ from prostate cancer patients. Each gene yields a two-sample $t$-statistic $t_i$ comparing tumor vs non-tumor cases as follows. Let $x_{ij}$ be the expression level of gene $i$ on patient $j$. Further let $\bx_i(1)$ and $\bx_i(2)$ denote the average of $x_{ij}$ for the normal controls and for the cancer patients. The two-sample $t$-statistic for testing gene $i$ is given by
\begin{align*}
t_i = \frac{\bx_i(2) - \bx_i(1)}{s_i}\,.
\end{align*} 
Here $s_i$ is an estimate of the variance of $\bx_i(2)-\bx_i(1)$,
$$ s_i^2 = \frac{1}{m-2}\, \Big(\frac{1}{m_1}+\frac{1}{m_2}\Big) \Big\{\sum_{j\in {\text{non-tumor}}} (x_{ij} - \bx(1))^2+ \sum_{j\in \text{tumor}} (x_{ij} - \bx(2))^2\Big\}\,.$$
The two-sided $p$-value for testing the null hypothesis $H_{i}:$ ``gene $i$ is null" is then given by $P_i = F_{m-2}(t_i)$, where $F_{m-2}$ is the cumulative distribution function of a student's $t$ distribution with $m-2$ degrees of freedom.

Since the gene expressions and thus the corresponding $p$-values are dependent in general, we apply adjusted BH and \LOND, as described in Section~\ref{sec:dependence}. For significance level $\alpha = 0.05$, BH returns $459$ genes as significant and \LOND returns $203$ significant ones. The significant genes suggested by \LOND are a subset of those returned by BH. As we see, although \LOND is designed to control FDR in an online manner, it recovers a good proportion of genes discovered by BH having access to all $p$-value a priori. 
%%%%%%%%%%%%%%%%%%%%%%%%%%%%%%%%
\section{Proof of theorems and lemmas}\label{sec:proofs}
%%%%%%%%%%%%%%%%%%%%%%%%%%%
\subsection{Proof of Theorem~\ref{thm:FDRII}}
We first show that $\FDR(n) \le \alpha$. Write
\begin{eqnarray}
\FDR(n) \equiv \E\Big(\frac{V^\theta(n)}{D(n)\vee 1}\Big)
 = \E\Big(\sum_{j=1}^n\frac{F^\theta_j}{D(n)\vee 1}  \Big)
 = \E\Big(\sum_{j=1}^n \frac{F^\theta_j}{\alpha_j} \cdot\frac{\alpha_j}{D(n)\vee1} \Big)\label{eq:FDRII-1}
\end{eqnarray}
%
%Without loss of generality, we assume discovery occurs at step $j$ otherwise the $j$-th term above is zero. 
Clearly, $(D(n)\vee 1) \ge (D(j) \vee 1)$ and by the rule~\eqref{eq:RuleII} we obtain
\begin{eqnarray}
\frac{\alpha_j}{D(n)\vee 1} \le \frac{\alpha_j}{D(j-1)+ 1} \le \beta_j\,.\label{eq:FDRII-2}
\end{eqnarray}
Condition~\eqref{eq:condition} can be stated as
\begin{eqnarray*}
\forall \theta\in \Theta,\quad \E_\theta(F^\theta_j -\alpha_j \vert D{(j-1)} )\le 0\,.
\end{eqnarray*}
Therefore,
\begin{eqnarray}
\E\Big(\frac{1}{\alpha_j} F^\theta_j\Big)= \E\Big\{\E\Big(\frac{F^\theta_j}{\alpha_j}\Big|D(j-1) \Big)\Big\} \le \E(1) =1\,,\label{eq:FDRII-3}
\end{eqnarray}
where we used the fact that $\alpha_j$ is deterministic given $D(j-1)$. 
Applying~\eqref{eq:FDRII-2} and~\eqref{eq:FDRII-3} to equation \eqref{eq:FDRII-1}, we get

$$\FDR(n) \le \sum_{j=1}^\infty \beta_j = \alpha\,.$$

We next prove that $\mFDR(n)\le \alpha$. 
%Let $V_{i}$ be the indicator that a false discovery occurs
%at step $i$. Therefore,
%%
%$$D(n) = \sum_{i=1}^n D_i\,. \quad \quad V^\theta(n) = \sum_{i=1}^n V_i\,.$$
%%  
Note that for all $\theta \in \Theta$, we have

$$\E_\theta(F^\theta_i - \alpha_i ) = \E(\E_\theta(F^\theta_i -\alpha_i|D(i-1))) \le 0\,,$$

where we used Condition~\eqref{eq:condition} in the last step.
%In fact, if $\theta_i \neq 0$, then $V_i = 0$ and
%$\E_{\theta}(V_i - \alpha_i ) = -\E_\theta(\alpha_i) \le 0$.\footnote{Note that $\alpha_i = \beta_i(D(i-1)+1)$ is a nonnegative random variable since $D(i-1)$ is a random variable.} In case $\theta_i = 0$, we have $V_i = D_i$ and 
%$\E_{\theta}(V_i -\alpha_i) = \E_{\theta}( D_i - \alpha_i) = 0$, since under the null hypothesis $\E_\theta(D_i) = \alpha_i$.
Adding these inequalities for $i=1, \dotsc, n$
 we get that the following holds true for all $\theta\in \Theta$:
\begin{eqnarray}\label{eq:VnII}
\E_\theta(V^\theta(n))- \sum_{i=1}^n \E_\theta(\alpha_i)  \le 0\,.
\end{eqnarray}
Further, for all $\theta\in \Theta$ we have
\begin{align}
\sum_{i=1}^n \E_\theta(\alpha_i) &= \sum_{i=1}^n \E_\theta\{\beta_i (D(i-1)+1)\}
=\sum_{i=1}^n \E_\theta(\beta_i \sum_{j=1}^{i-1} D_j ) + \sum_{i=1}^n \beta_i \nonumber\\
& = \sum_{j=1}^{n-1} \Big(\sum_{i=j+1}^n \beta_i \Big) \E_\theta(D_j) + \sum_{i=1}^n \beta_i \nonumber\\
&\le \Big(\sum_{i=1}^n \beta_i \Big)\Big\{\E_\theta(D(n))+1\Big\}\,.\label{eq:DmII}
\end{align}
The result follows by combining equations~\eqref{eq:VnII} and~\eqref{eq:DmII}.
\subsection{Proof of Theorem~\ref{thm:FDR}}
\label{proof-thm:FDR}

We first prove the claim on controlling $\mFDR$. Proceeding along the same lines as the proof of equation~\eqref{eq:VnII}, we have
the following for all $\theta \in \Theta$:
\begin{eqnarray}\label{eq:Vn}
\E_\theta(V^\theta(n))- \sum_{i=1}^n \E_\theta(\alpha_i) \le 0\,.
\end{eqnarray}
By equation~\eqref{eq:Rule}, it is clear that for all $\theta\in \Theta$, we have
\begin{eqnarray}\label{eq:Dm}
 \alpha\E_\theta (D(n)+1) - \sum_{i=1}^n \E_\theta(\alpha_i)  \ge 0\,,
\end{eqnarray}
because sum of significance levels $\alpha_i$ between two consecutive discoveries is bounded by $\alpha$.
Combining equations~\eqref{eq:Vn} and~\eqref{eq:Dm}, we obtain the desired result.

We next prove equation~\eqref{eq:FDR}. 
Define random variables $X_i$ for $i \ge 1$ as follows:

%Denote by $N = D_\theta(\infty)$ total number of discoveries if the testing process is applied to the stream.
%Note that $N$ is a random variable that can be finite or infinite. We further let $Z_k \equiv \ind(N\ge k)$ for $k\ge 1$ and
%define random variable $X_k$ as follows :
%
\begin{equation}\label{eq:X_i}
X_i  = \begin{cases}
1& \text{ if $\tau_i < \infty$ and $i$-th discovery is false,}\\
0& \text{ if $\tau_i < \infty$ and $i$-th discovery is true,}\\
*& \text{ otherwise.}
\end{cases}
\end{equation}
%
%Let $\cF_i$ be the $\sigma$-algebra generated by $\{\tau_1, \dotsc, \tau_i\}$.
We first show that $\E(X_i \, \ind(\tau_i < \infty) |\tau_{i-1})\le \alpha$. 
This is in fact the probability that starting from $\tau_{i-1}+1$ at least one discovery
occurs and the first discovery is false. Therefore, by applying union bound we have
\begin{eqnarray}
\E(X_i \, \ind(\tau_i < \infty) |\tau_{i-1})\le \sum_{\ell=\tau_{i-1}+1}^\infty \prob_{\theta_\ell=0}(D_\ell = 1) \le \sum_{r=1}^\infty \beta_r = \alpha\,,
\end{eqnarray}
where the last inequality follows from the fact that under null hypothesis, $\theta_\ell =0$, we have $p_\ell\sim\cU([0,1])$, and
thus $D_\ell= 1$ with probability at most $\alpha_\ell$. Moreover, $\alpha_\ell = \beta_{\ell-\tau_{i-1}}$ for $\ell \le \tau_i$ due to rule~\eqref{eq:Rule}. Hence,
\begin{align*}
\E(\FDP(\tau_k) \ind(\tau_k<\infty)) &= \E\Big[\frac{V^\theta(\tau_k)}{k}\, \ind(\tau_k< \infty) \Big]\\
 &= \frac{1}{k}\sum_{i=1}^k \E ( X_i \,\ind(\tau_k < \infty))\\
&\le \frac{1}{k}\sum_{i=1}^k \E \Big\{ \E(X_i\, \ind(\tau_i < \infty)|\tau_{i-1})\Big\} \le \alpha\,.
\end{align*}
%

%%%%%%%%%%%%%%%%%%%%%%%%%%%%%%
\subsubsection{Proof of Remark~\ref{rem:mFDR_strong}}
Let $X_i$ be defined as per equation~\eqref{eq:X_i}. We have
\begin{eqnarray}
V^\theta(n) = \sum_{i=1}^\infty X_i\, \ind(\tau_i \le n)\,.
\end{eqnarray}
Therefore,
\begin{eqnarray}
\E (V^\theta(n)) = \sum_{i=1}^\infty \E(X_i\, \ind (\tau_i\le n))\,.
\end{eqnarray}
%
%Let $\cF_i$ be the $\sigma$-algebra generated by $\{\tau_1, \dotsc, \tau_i\}$. 
By applying union bound, we have
\begin{align*}
\E(X_i\,\ind(\tau_i\le n)|\tau_{i-1}) &\le \Big(\sum_{\ell=\tau_{i-1}+1}^n \prob_{\theta_\ell=0} (D_\ell = 1)\Big)\, \ind(\tau_{i-1}\le n-1) \\
&\le \alpha\, \ind(\tau_{i-1}\le n-1)\,.
\end{align*}
Hence, adopting the convention $\tau_0=0$, we get
\begin{align}
\E (V^\theta(n)) &\le \alpha\, \sum_{i=1}^\infty  \prob(\tau_{i-1}\le n-1) = 
\alpha\, \sum_{i=1}^\infty \prob(\tau_{i}\le n-1) + \alpha\,\nonumber\\
&= \alpha\, \E\Big(\sum_{i=1}^\infty \ind(\tau_i\le n-1) \Big) + \alpha\nonumber\\
&= \alpha\, \E (D(n-1)) + \alpha\,.\label{eq:EFn}
\end{align}
The result follows.
%
%We prove our claim by induction on $n$. The basis $n=1$ follows readily from equation~\eqref{eq:EFn}.
%Suppose that the claim holds for $n-1$. We consider the following two cases.
%\begin{itemize}
%\item[$\bullet$] Case $D_n=1$: In this case, $D(n) = D(n-1)+1$, and equation~\eqref{eq:EFn} gives the result for $n$.
%\item[$\bullet$] Case $D_n =0$: In this case, $V(n) = V(n-1)$ and $D(n) = D(n-1)$. The result follows from induction hypothesis.
%\end{itemize}
%
%==================================================
\subsection{Proof of Theorem~\ref{thm:dependency-FDR}}
\label{proof:dependency-FDR}
Let $\event_{v,u}$ be the event that $\LOND$ makes $v$ false and $u$ true discoveries in $\cH(n)$.
We further denote by $n_0$ and $n_1$ the number of true and false null hypotheses in $\cH(n)$.
The $\FDR(n)$ is then
$$\FDR(n) \equiv \E(\FDP(n)) = \sum_{v=0}^{n_0} \sum_{u=0}^{n_1} \frac{v}{(v+u)\vee 1}\, \prob(\event_{v,u})\,.$$
We use the following lemma from~\cite{benjamini2001control} and state its proof here for the reader's convenience.
\begin{lemma}[~\cite{benjamini2001control}]\label{lem:dummy_ind}
The following holds true:
$$\prob(\event_{v,u}) = \frac{1}{v} \sum_{i=1}^{n_0} \prob((p_i\le \alpha) \cap \event_{v,u})\,.$$
\end{lemma}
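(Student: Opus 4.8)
The plan is to prove the combinatorial identity $\prob(\event_{v,u}) = \frac{1}{v}\sum_{i=1}^{n_0}\prob((p_i\le\alpha)\cap\event_{v,u})$ by a counting/symmetry argument that re-expresses the event $\event_{v,u}$ in terms of which individual true null hypotheses are rejected. First I would observe that the left-hand event $\event_{v,u}$ requires exactly $v$ of the $n_0$ true null hypotheses to be falsely rejected; conditional on $\event_{v,u}$ occurring, precisely $v$ of the indices $i\in\{1,\dots,n_0\}$ satisfy the rejection condition, which in the \LOND setup amounts to $p_i\le\alpha_i$ for the relevant significance level. The key step is to rewrite the indicator of $\event_{v,u}$ as $\ind(\event_{v,u}) = \frac{1}{v}\sum_{i=1}^{n_0}\ind\big((\text{$H_i$ is rejected})\cap\event_{v,u}\big)$, which holds pointwise on every sample: on $\event_{v,u}$ the sum counts exactly the $v$ falsely rejected true nulls, giving $v$, and dividing by $v$ recovers $1$; off $\event_{v,u}$ both sides vanish.

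The subtlety is that the event ``$H_i$ is rejected'' in \LOND is the event $\{p_i\le\alpha_i\}$ where $\alpha_i = \tilde\beta_i(D(i-1)+1)$ depends on the number of prior discoveries, not a fixed threshold $\alpha$. The lemma as stated writes $\prob((p_i\le\alpha)\cap\event_{v,u})$ with the plain threshold $\alpha$, so I would clarify that on the event $\event_{v,u}$ the rejection of $H_i$ is equivalent to the stated condition — that is, I would want to argue that intersecting with $\event_{v,u}$ pins down the random threshold $\alpha_i$ so that the rejection event and $\{p_i\le\alpha\}$ (or the appropriate level) coincide on $\event_{v,u}$. I would take the event $\{p_i\le\alpha\}$ in the statement to mean the event that $H_i$ is rejected, and the replacement of $\event_{v,u}$ by the per-hypothesis rejection indicator is exactly the content of the identity. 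Taking expectations of the pointwise identity then yields the probabilistic statement by linearity.

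The main obstacle I anticipate is making rigorous the claim that on $\event_{v,u}$ the set of rejected true nulls has cardinality exactly $v$ \emph{and} that each summand $\prob((p_i\le\alpha)\cap\event_{v,u})$ correctly captures ``$H_i$ is among the $v$ falsely rejected.'' Concretely, I must verify that the intersection $(H_i \text{ rejected})\cap\event_{v,u}$ is measurable and that its indicator sums over $i$ to $v\cdot\ind(\event_{v,u})$, with no double counting and no contribution from non-rejected or truly-null-but-accepted indices. This reduces to the elementary observation that for a fixed realization lying in $\event_{v,u}$, the map $i\mapsto\ind(H_i\text{ rejected})$ restricted to $i\le n_0$ is the indicator of a set of size exactly $v$. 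Once this pointwise bookkeeping is established, the identity follows immediately by summing and taking expectations, so the ``hard part'' is purely in correctly setting up the event decomposition rather than in any analytic estimate.
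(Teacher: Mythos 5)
Your proof is correct and is essentially the paper's own argument: both rest on the observation that on $\event_{v,u}$ exactly $v$ of the $n_0$ true nulls are rejected, so summing the rejection indicators over $i\le n_0$ gives $v\,\ind(\event_{v,u})$, and taking expectations finishes the proof. The paper formalizes this bookkeeping by partitioning $\event_{v,u}$ into disjoint events $\event^\omega_{v,u}$ indexed by the subset $\omega$ of falsely rejected true nulls and double-counting over $i$ and $\omega$, which is just a variant of your pointwise indicator identity; your reading of $\{p_i\le\alpha\}$ in the statement as the rejection event $\{p_i\le\alpha_i\}$ is also exactly what the paper's proof uses.
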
 

\begin{proof}
Fix $v$ and $u$ and for a subset $\omega \in \{1, \dotsc, n_0\}$ with $|\omega| = v$, denote by $\event^\omega_{v,u}$
the event that the $v$ false discoveries are $\omega$. We further note that 
\begin{align*}
\prob((p_i \le \alpha_i) \cap \event^\omega_{v,u}) = \begin{cases}
\prob(\event^\omega_{v,u})\, & \text{ if }i\in \omega\,,\\
0& \text{ otherwise .}
\end{cases} 
\end{align*}
Therefore,
\begin{align*}
\sum_{i=1}^{n_0} \prob((p_i\le \alpha_i) \cap \event_{v,u}) &= \sum_{i=1}^{n_0} \sum_{\omega} \prob((p_i\le \alpha_i) \cap \event^\omega_{v,u})\\
&= \sum_{\omega} \sum_{i=1}^{n_0}\prob((p_i\le \alpha_i) \cap \event^\omega_{v,u})\\
&= \sum_{\omega} \sum_{i=1}^{n_0} \ind(i\in \omega) \prob(\event^\omega_{v,u}) = \sum_{\omega} v \prob(\event^\omega_{v,u})
= v \prob(\event_{v,u})\,,
\end{align*}
which completes the proof.
\end{proof}

Applying Lemma~\ref{lem:dummy_ind} we obtain
\begin{align}
\FDR(n) \equiv \sum_{v=0}^{n_0} \sum_{u=0}^{n_1}\frac{1}{(v+u)\vee1} \sum_{i=1}^{n_0} \prob((p_i\le \alpha_i) \cap \event_{v,u})\,.
\end{align}
For $i\ge 1$ and $0\le a\le i-1$, we let the event $\cC^{(i)}_{s,v,u}$ be the event that if $p_i \le \alpha_i$ (i.e., $H_i$ is rejected) then there are $u$ true discoveries, and $v$ false discoveries in $\cH(n)$ such that $s$ number of false discoveries occur before time step $i$. Clearly,
$$\prob((p_i\le \alpha_i) \cap\event_{v,u}) = \cup_{s=0}^{i-1}\, \prob((p_i\le \alpha_i) \cap \cC^{(i)}_{s,v,u})\,.$$
Since the events $\cC^{(i)}_{s,v,u}$ are disjoint for different values of $s$, we obtain
\begin{align}
\FDR(n) \equiv \sum_{v=1}^{n_0} \sum_{u=0}^{n_1}\frac{1}{v+u} \sum_{i=1}^{n_0} \sum_{s=0}^{i-1} \prob((p_i\le \alpha_i) \cap \cC^{(i)}_{s,v,u})\,.
\end{align}
Rearranging the terms and recalling the rule $\alpha_i = \tbeta_i(D(i-1)+1)$, we arrive at
\begin{align}\label{eq:isvu}
\FDR(n) \equiv \sum_{i=1}^{n_0}  \sum_{s=0}^{i-1} \sum_{v=1}^{n_0} \sum_{u=0}^{n_1}\frac{1}{v+u}  \prob([p_i\le \tbeta_i (s+1)] \cap \cC^{(i)}_{s,v,u})\,.
\end{align}
For $s+ 1\le k\le n$, define the event $\cC^{(i)}_{s,k}$ as 
$$\cC^{(i)}_{s,k} \equiv \underset{\substack{v,u \\ v+u = k}}{\bigcup} \cC^{(i)}_{s,v,u}\,.$$
In words, $\cC^{(i)}_{s,k}$ is the event that there are $k$ discoveries in $\cH(n)$ with $s$ false discoveries occurred before time $i$. 
Writing the RHS of equation~\eqref{eq:isvu} in terms of $k$ instead of $v$ and $s$, we have
\begin{align*}
\FDR(n) &\equiv \sum_{i=1}^{n_0}  \sum_{s=0}^{i-1} \sum_{k=s+1}^n \frac{1}{k} \prob([p_i\le \tbeta_i (s+1)] \cap \cC^{(i)}_{s,k})\\
&\le \sum_{i=1}^{n_0} \sum_{s=0}^{i-1}  \frac{1}{s+1}\sum_{k=s+1}^n \prob([p_i\le \tbeta_i (s+1)] \cap \cC^{(i)}_{s,k})\,.
\end{align*}
We define $\cC^{(i)}_s$  as the event that $s$ false discoveries occur before time $i$. Equivalently, 
$$\cC^{(i)}_s \equiv \underset{s+1\le k\le n}{\bigcup} \cC^{(i)}_{s,k}\,.$$ 
Given that events $\cC^{(i)}_{s,k}$ are disjoint for different values of $k$, we get
\begin{align*}\label{eq:FDR-bound}
\FDR(n) &\le \sum_{i=1}^{n_0} \sum_{s=0}^{i-1}  \frac{1}{s+1} \prob([p_i\le \tbeta_i (s+1)] \cap \cC^{(i)}_{s})\,. 
\end{align*}
For $j\in \{0,\dotsc, s\}$, denote 
$$q_{i,j,s}\equiv \prob(\{p_i\le [\tbeta_i j, \tbeta_i (j+1)]\} \cap \cC^{(i)}_{s})\,.$$
Writing bound~\eqref{eq:FDR-bound} in terms of $q_{i,j,s}$, we have
\begin{align*}
\FDR(n) &\le  \sum_{i=1}^{n_0} \sum_{s=0}^{i-1} \frac{1}{s+ 1} \sum_{j=0}^s q_{i,j,s}\\
&\le \sum_{i=1}^{n_0}\sum_{j=0}^{i-1} \sum_{s=j}^{i-1} \frac{1}{s+ 1} q_{i,j,s}
\le \sum_{i=1}^{n_0}\sum_{j=0}^{i-1}  \frac{1}{j+1}\sum_{s=j}^{i-1} q_{i,j,s}\\
& = \sum_{i=1}^{n_0}\sum_{j=0}^{i-1} \frac{1}{j+1} \prob\Big(p_i\le [\tbeta_i j, \tbeta_i (j+1)]\Big)\\
&\stackrel{(a)}{=}  \sum_{i=1}^{n_0} \Big(\sum_{j=1}^{i} \frac{1}{j}\Big) \tbeta_i \le \sum_{i=1}^{n_0} \beta_i = \alpha\,,
\end{align*}
where in step $(a)$, we used the fact $p_i\sim \cU([0,1])$ under the null hypothesis $H_i$. 
\subsection{Proof of Theorem~\ref{thm:LOND_rate}}
\label{proof:LOND_rate}
We let $\tau_\ell$ denote the time of $\ell$-th discovery. Under the mixture model, for $m\ge \tau_\ell+1$ we have
\begin{eqnarray}\label{eq:G1}
\prob(\tau_{\ell+1} > m | \tau_\ell) = \prod_{i=\tau_\ell +1}^m \Big(1- G((\ell+1) \beta_i) \Big)
\le \exp\Big\{-\sum_{i=\tau_\ell+1}^m G((\ell+1)\beta_i) \Big\}\,.
\end{eqnarray}
%
%Fix $1<\nu<1/\kappa$ and choose
%%
%$$\beta_m = \frac{C}{m^\nu}\,,$$
%%
%where $C$ is chosen such that $\sum_{m=1}^\infty \beta_m = \alpha$. 
Recall that $G(x) = (1-\eps)x+ \eps F(x)$ and note that $\tau_\ell \ge \ell$. Hence, for some constant $L_0$ and all $\ell > L_0$ the following holds:
 \begin{align}
 \sum_{i=\tau_\ell+1}^m G((\ell+1)\beta_i)  &\ge \eps \lambda C^\kappa (\ell+1)^\kappa \sum_{i=\tau_\ell+1}^m  i^{- \kappa\nu} \\
 &\ge \eps \lambda C^\kappa (\ell+1)^\kappa m^{-\kappa\nu}  \sum_{i=\tau_\ell+1}^m 1 \nonumber \\
 &=  \eps \lambda C^\kappa  (\ell+1)^\kappa m^{-\kappa\nu} (m-\tau_\ell-1)  \,. \label{eq:G2}
 \end{align}
 %
 %where we used the fact that $s^{-\kappa\nu}$ is decreasing in $s$. We further note that 
 %
 %\begin{eqnarray}\label{eq:G3}
 %n^{1-\kappa\nu} - (\tau_\ell+1)^{1-\kappa\nu}   = (n - \tau_\ell-1) (1-\kappa\nu) z^{-\kappa\nu}
 %\ge (n - \tau_\ell-1) (1-\kappa\nu) n^{-\kappa\nu}\,, 
 %\end{eqnarray}
 %
 %for some $z \in (\tau_\ell+1,n)$ by mean-value theorem. 
 Combining equations~\eqref{eq:G1} and \eqref{eq:G2} we get
\begin{eqnarray*}
\prob(\tau_{\ell+1} > m|\tau_\ell) \le \exp\Big\{-\eps \lambda C^\kappa(\ell+1)^\kappa  (m - \tau_\ell-1) m^{-\kappa\nu}\Big\}\,.
\end{eqnarray*} 

 Next we compute $\E(\tau_{\ell+1}|\tau_\ell)$. Since $\prob(\tau_{\ell+1} \ge \tau_\ell+1|\tau_\ell) = 1$, we have
 \begin{align*}
 \E(\tau_{\ell+1}|\tau_\ell) &= \tau_{\ell}+1 + \sum_{m = \tau_\ell+1}^\infty \prob(\tau_{\ell+1} > m|\tau_{\ell})\\
 &\le \tau_{\ell}+2 + \sum_{i=1}^\infty \exp\Big\{-\eps \lambda C^\kappa (\ell+1)^\kappa i (i+\tau_\ell+1)^{-\kappa \nu} \Big\}
 \end{align*}
 We split the summation over $1\le i\le \tau_{\ell}+1$ and $\tau_{\ell}+2 \le i$ and upper bound each term separately.
 \begin{align}
 I_1 &\equiv  \sum_{i=1}^{\tau_{\ell}+1} \exp\Big\{-\eps \lambda C^\kappa (\ell+1)^\kappa i (i+\tau_\ell+1)^{-\kappa \nu} \Big\}\nonumber\\
 &\le \sum_{i=1}^{\tau_{\ell}+1} \exp\Big\{-\eps \lambda C^\kappa (\ell+1)^\kappa i (2\tau_\ell+2)^{-\kappa \nu} \Big\}\nonumber\\
 &\stackrel{(a)}{\le} \int_0^{\tau_\ell+1}  \exp\Big\{- \eps \lambda C^\kappa (\ell+1)^\kappa (2\tau_\ell+2)^{-\kappa \nu} z \Big\} \de z\nonumber\\
 & \le \frac{(2\tau_\ell+2)^{\kappa \nu}}{\eps \lambda C^\kappa (\ell+1)^\kappa}\,,\label{eq:I1}
 \end{align}
 where $(a)$ holds since the summand is decreasing in $i$. we next bound the second term as follows.
  \begin{align}
 I_2 &\equiv  \sum_{i=\tau_\ell+2}^{\infty} \exp\Big\{-\eps \lambda C^\kappa (\ell+1)^\kappa i (i+\tau_\ell+1)^{-\kappa \nu} \Big\}\nonumber\\
 &\le \sum_{i=\tau_\ell+2}^\infty \exp\Big\{-\eps \lambda C^\kappa (\ell+1)^\kappa  2^{-\kappa \nu} i^{1-\kappa\nu} \Big\}\nonumber\\
 &\stackrel{(a)}{\le} \int_{\tau_\ell+1}^\infty  \exp\Big\{-\eps \lambda C^\kappa (\ell+1)^\kappa 2^{-\kappa \nu} z^{1-\kappa\nu} \Big\} \de z\,,\label{eq:I2}
 \end{align}
 where $(a)$ holds since the summand is decreasing in $i$. Define $C_* = \eps \lambda C^\kappa (\ell+1)^\kappa 2^{-\kappa\nu}$. It is straightforward to see that
 \begin{eqnarray}
 \int_{0}^\infty  \exp\Big(-C_* z^{1-\kappa\nu} \Big) \de z =
 \frac{1}{1-\kappa\nu} C_*^{-1/(1-\kappa\nu)} \Gamma\Big(\frac{1}{1-\kappa\nu}\Big)\,.\label{eq:I2-2}
 \end{eqnarray}
 Combining the bounds~\eqref{eq:I1}, \eqref{eq:I2} and~\eqref{eq:I2-2}, we obtain that for $\ell >L_0$,
 \begin{eqnarray}
 \E(\tau_{\ell+1} | \tau_\ell) \le \tau_\ell+2+ \frac{(2\tau_\ell+2)^{\kappa \nu}}{\eps \lambda C^\kappa (\ell+1)^\kappa}+ C' (\ell+1)^{-\frac{\kappa}{1-\kappa\nu}}\,,\label{eq:recursI}
 \end{eqnarray}
 with 
 $$C'\equiv \frac{1}{1-\kappa\nu}\,\Big(\eps \lambda C^\kappa 2^{-\kappa\nu}\Big)^{{-1}/(1-\kappa\nu)} \Gamma\Big(\frac{1}{1-\kappa\nu}\Big)\,.$$
 Here $\Gamma(t) = \int_0^\infty z^{t-1}e^{-z} \de z$ is the gamma function.
 
 Let $\lambda_\ell \equiv \E(\tau_\ell)$. Since $f(s) = s^{\kappa \nu}$ is concave, by applying Jensen inequality to equation~\eqref{eq:recursI}, we arrive at the following recursive bound for $\ell > L_0$:
 \begin{eqnarray}\label{eq:lambda_rec}
 \lambda_{\ell+1} \le \lambda_\ell+2 + \frac{(2\lambda_\ell+2)^{\kappa \nu}}{\eps \lambda C^\kappa (\ell+1)^\kappa}+ C' (\ell+1)^{\frac{-\kappa}{1-\kappa\nu}}\,.
 \end{eqnarray}
 In Appendix~\ref{app:lambda_B}, we show that using equation~\eqref{eq:lambda_rec}
 \begin{eqnarray}\label{eq:lambda_B}
 \lambda_\ell \le \tC\,\ell^{\frac{1-\kappa}{1-\kappa\nu}}\,,
 \end{eqnarray}
 for some constant $\tC = \tC(\kappa,\nu)$. 
 Let $\zeta_n \equiv (\delta n/\tC)^{\frac{1-\kappa\nu}{1-\kappa}}$.
 Applying Markov inequality, for any $n$ and any fixed $0<\delta<1$, 
 we obtain
 \begin{align}
 \prob\{D^\LOND(n) < \zeta_n\}  &= \prob\{\tau_{\zeta_n} > n\} \le \frac{\lambda_{\zeta_n}}{n}
 \le \frac{\tC\zeta_n^{\frac{1-\kappa}{1-\kappa\nu}}}{n}  = \delta\,,
 \end{align}
which completes the proof.

%%%%%%%%%%%%%%%%%%%%%%%%%%%%%%%%%
\appendix

%%%%%%%%%%%%%%%%%%%%%%%%%%%%%%%%%%%%%%%%%%%%
\section{Derivation of equation~\eqref{eq:lambda_B}}\label{app:lambda_B}
Let $C_1 = 2^{\kappa \nu}/(\eps \lambda C^\kappa)$ and $C_2 = C'+2$. We relax the bound~\eqref{eq:lambda_rec} as
\begin{eqnarray}\label{eq:lambda_B1}
\lambda_{\ell+1} \le \lambda_\ell + C_1 \frac{(\lambda_\ell+1)^{\kappa \nu}}{(\ell+1)^\kappa} + C_2\,.
\end{eqnarray}
We write $\lambda_\ell = g_\ell+\Delta_\ell$ where 
$$g_\ell \equiv \left(C_1\frac{1-\kappa\nu}{1-\kappa}\right)^{\frac{1}{1-\kappa\nu}} (\ell+1)^{\frac{1-\kappa}{1-\kappa\nu}} -1\,.$$
Writing equation~\eqref{eq:lambda_B1} in terms of $\Delta_\ell$ and $g_\ell$ we get, for $\ell> L_0$, 
\begin{align*}
g_{\ell+1} + \Delta_{\ell+1} &\le g_\ell+\Delta_\ell + C_1 \frac{(g_\ell+\Delta_\ell+1)^{\kappa\nu}}{(\ell+1)^\kappa}  + C_2\\
&= g_\ell+\Delta_\ell + C_1 \frac{(g_\ell+1)^{\kappa\nu}}{(\ell+1)^\kappa} \Big(1+\frac{\Delta_\ell}{g_\ell+1}\Big)^{\kappa\nu} + C_2\,.
\end{align*}
Since $0 < \kappa\nu \le 1$, applying Bernoulli's inequality, we obtain
\begin{align}\label{eq:lambda_B3}
g_{\ell+1} + \Delta_{\ell+1} \le g_\ell+\Delta_\ell + C_1 \frac{(g_\ell+1)^{\kappa\nu}}{(\ell+1)^\kappa} \Big(1+ \kappa\nu\frac{\Delta_\ell}{g_\ell+1}\Big) + C_2\,.
\end{align}
Note that 
\begin{align*}
C_1\frac{(g_\ell+1)^{\kappa\nu}}{(\ell+1)^\kappa} &= C_1^{\frac{1}{1-\kappa\nu}}  \Big(\frac{1-\kappa\nu}{1-\kappa}\Big)^{\frac{\kappa\nu}{1-\kappa\nu}} (\ell+1)^{\frac{\kappa\nu-\kappa}{1-\kappa\nu}}\\
&\le  \Big(C_1\frac{1-\kappa\nu}{1-\kappa}\Big)^{\frac{1}{1-\kappa\nu}} \Big\{(\ell+ 2)^{\frac{1-\kappa}{1-\kappa\nu}} - (\ell+1)^{\frac{1-\kappa}{1-\kappa\nu}}\Big\}\\
&= g_{\ell+1} - g_\ell\,,
\end{align*}
where the inequality holds since $(1-\kappa)/(1-\kappa\nu)>1$. Using this bound in equation~\eqref{eq:lambda_B3} gives the following
\begin{align*}
\Delta_{\ell+1} \le \Delta_\ell + C_1 \kappa \nu\frac{(g_\ell+1)^{\kappa\nu-1}}{(\ell+1)^\kappa}\Delta_\ell + C_2\,.
\end{align*}
Plugging in for $g_\ell$, we arrive at
\begin{eqnarray}
\Delta_{\ell+1} \le \Delta_\ell + \frac{\kappa\nu(1-\kappa)}{1-\kappa\nu} \frac{\Delta_\ell}{\ell+1} + C_2\,,
\end{eqnarray}
for $\ell\ge L_0$. The claim follows readily applying the lemma below with $\beta = \kappa \nu$ and $\mu = (1-\kappa)/(1-\kappa\nu)$.
\begin{lemma}\label{lem:Delta}
Suppose that the sequence $\{\Delta_\ell\}_{\ell=1}^\infty$ satisfies the following inequalities for $\ell\ge L_0$ with constants $0<\beta<1$, $\mu\ge1$ and $C>0$.
\begin{align}\label{eq:lem_Delta}
\Delta_{\ell+1} \le \Delta_\ell + \beta \mu \frac{\Delta_\ell}{\ell+1} + C\,.
\end{align}
Then there exits constant $C_0>0$ such that $\Delta_\ell \le C_0 \ell^\mu$ for $\ell \ge 1$.
\end{lemma}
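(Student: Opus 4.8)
The plan is to prove the bound by induction on $\ell$ using the ansatz $\Delta_\ell \le C_0\ell^\mu$, with the constant $C_0$ pinned down at the end. First I would rewrite the hypothesis \eqref{eq:lem_Delta} in homogeneous-plus-forcing form,
\begin{equation}
\Delta_{\ell+1} \le \Big(1 + \frac{\beta\mu}{\ell+1}\Big)\Delta_\ell + C\,,
\end{equation}
so that the multiplicative growth factor $1 + \beta\mu/(\ell+1)$ is exposed. The key structural observation is that, because $\beta<1$, this factor is strictly smaller than the factor $1 + \mu/(\ell+1)$ that would make $\ell^\mu$ an exact fixed point of the homogeneous recursion. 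The resulting gap is precisely what will absorb the additive constant $C$.

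For the inductive step I would assume $\Delta_\ell \le C_0\ell^\mu$ and estimate
\begin{equation}
\Delta_{\ell+1} \le \Big(1+\frac{\beta\mu}{\ell+1}\Big)C_0\ell^\mu + C = C_0\ell^\mu + \frac{\beta\mu}{\ell+1}\,C_0\ell^\mu + C\,,
\end{equation}
and then compare the right-hand side against the target $C_0(\ell+1)^\mu$. The two elementary facts I would invoke are Bernoulli's inequality in the form $(\ell+1)^\mu = \ell^\mu(1+1/\ell)^\mu \ge \ell^\mu + \mu\ell^{\mu-1}$, valid since $\mu\ge1$, together with $\ell^\mu/(\ell+1)\le \ell^{\mu-1}$. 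Substituting both, the desired inequality $\Delta_{\ell+1}\le C_0(\ell+1)^\mu$ reduces to
\begin{equation}
\frac{C}{C_0} \le (1-\beta)\,\mu\,\ell^{\mu-1}\,.
\end{equation}
Since $\mu\ge1$ gives $\ell^{\mu-1}\ge1$ for every $\ell\ge1$, this holds for all $\ell$ as soon as $C_0 \ge C/((1-\beta)\mu)$.

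Finally I would choose $C_0$ to meet this slack condition and the base cases simultaneously. Because the recursion is only assumed for $\ell\ge L_0$, I would set
\begin{equation}
C_0 = \max\Big\{\frac{C}{(1-\beta)\mu},\ \max_{1\le \ell\le L_0}\frac{\Delta_\ell}{\ell^\mu}\Big\}\,,
\end{equation}
so that $\Delta_\ell\le C_0\ell^\mu$ holds trivially for $\ell\le L_0$ and then propagates to all $\ell\ge L_0$ by the step above, establishing the claim. The computation is wholly elementary, so there is no serious obstacle; the one point I would treat as the crux is verifying that the strict inequality $\beta<1$ is exactly what produces the positive slack $(1-\beta)\mu\ell^{\mu-1}$ needed to dominate the constant forcing term—this is the only place the hypothesis $\beta<1$ is used, and without it the additive $C$ would accumulate and spoil the clean $\ell^\mu$ bound.
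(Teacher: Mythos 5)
Your proof is correct and follows essentially the same route as the paper's: induction on $\ell$, with $C_0$ chosen as a maximum so that it covers the initial segment $1\le\ell\le L_0$ and simultaneously provides the slack $(1-\beta)\mu\ell^{\mu-1}$ needed to absorb the forcing constant $C$. The only cosmetic difference is that you lower-bound $(\ell+1)^\mu$ directly via Bernoulli's inequality and use $\ell^\mu/(\ell+1)\le\ell^{\mu-1}$, whereas the paper multiplies through by $\ell+1$ and invokes $(\mu+1)\ell^\mu \le (\ell+1)^{\mu+1}-\ell^{\mu+1}$ together with $\ell+1\le 2\ell$, which is why its constant is $2C/(\mu(1-\beta))$ rather than your slightly cleaner $C/((1-\beta)\mu)$.
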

\begin{proof}
Choose $C_0$ as follows
\begin{align}\label{eq:C0}
C_0 \equiv \max\Big\{(\Delta_i i^{-\mu})_{1\le i\le L_0}, \frac{2C}{\mu(1-\beta)}\Big\}\,.
\end{align}
We prove the claim by induction on $\ell$. The induction basis $\ell = L_0$ holds clearly by choice of $C_0$.
Suppose that the claim holds for $\ell$; we prove it for $\ell+1$. Using equation~\eqref{eq:lem_Delta},we have
\begin{align*}
\Delta_{\ell+1} &\le \Delta_\ell + \beta \mu \frac{\Delta_\ell}{\ell+1} +C\\
 &\le C_0\ell^\mu + C_0\beta \mu\frac{\ell^\mu}{\ell+1}+ C\,. 
\end{align*}
In order to prove the claim, it is sufficient to show that the RHS above is not larger than $C_0 (\ell+1)^\mu$. Equivalently,
$$(\ell+1)\ell^\mu + \beta\mu \ell^\mu + (C/C_0) (\ell+1) - (\ell+1)^{\mu+1} \le 0\,. $$
Using inequality $(\mu+1)\ell^\mu \le (\ell+1)^{\mu+1} - \ell^{\mu+1}$, we bound the RHS as follows
\begin{align*}
&(\ell+1)\ell^\mu + \beta\mu \ell^\mu + (C/C_0) (\ell+1) - (\ell+1)^{\mu+1}\\
&\le(1+\beta\mu) \ell^\mu + (C/C_0) (\ell+1) -(\mu+1)\ell^\mu\\
&= (\beta-1)\mu\ell^\mu + (C/C_0)(\ell+1)\\
&\le (\beta-1)\mu\ell^\mu + (2C/C_0)\ell \le 0\,,
\end{align*}
where the last inequality follows from the choice of $C_0$ as per~\eqref{eq:C0}.
\end{proof}
%%%%%%%%%%%%%%%%%%%%%%%%%%%%%%%%%%%%%%%%%%
\section{Proof of Theorem~\ref{thm:EDm}}
 Given the update rule~\eqref{eq:Rule}, it is clear that the times between successive discoveries are i.i.d., under the mixture model. Therefore, the process of discoveries is a renewal process. Let $\mu \equiv \E(t_i)$ be the mean inter-discovery time, where $t_i = \tau_i - \tau_{i-1}$. By the strong law of large numbers for renewal processes~\cite{durrett2010probability}, the following holds almost surely
 $$\lim_{n\to \infty}\frac{1}{n}\,D^\LORD(n) = \frac{1}{\mu}\,.$$
 We also have
 $$\prob(t_i \ge k)  = \prod_{\ell=1}^k \Big(1 - G(\beta_\ell) \Big) \le e^{-\sum_{\ell=1}^k G(\beta_k)}\,.$$
 Therefore,
 $$\mu \equiv \E(t_i) = \sum_{k=1}^\infty  \prob(t_i \ge k) \le \sum_{k=1}^\infty e^{-\sum_{\ell=1}^k G(\beta_k)}\,,$$
 which yields the desired result. Equation~\eqref{eq:LORD_E} follows by using the elementary renewal theorem~\cite{durrett2010probability}.
%% that's all folks
\bibliographystyle{myalpha}
\bibliography{all-bibliography}
%\addcontentsline{toc}{section}{References}

\end{document}